\newcommand{\orcid}[1]{\,\href{https://orcid.org/#1}{\includegraphics[width=8pt]{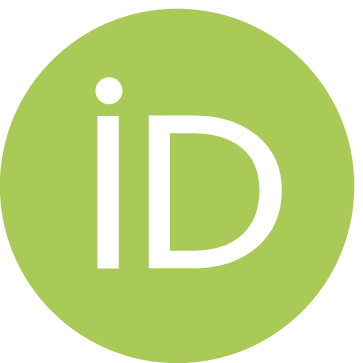}}}
\newcommand{\hl}[1]{\textcolor{red}{#1}}
\renewcommand{\P}{\mathcal{P}}
\newcommand{\ZZ}{\mathbb{Z}}
\newcommand{\kcd}{\textsc{$(k,c)$-Decomposition}}
\newcommand{\mcd}{\textsc{Minimum-Color Decomposition}}
\newcommand{\msd}{\textsc{Minimum-Size Decomposition}}
\newcommand{\kcfc}{\textsc{$(k,c)$-Fragmented Coloring}}
\newcommand{\bp}{\textsc{Bin Packing}}
\newcommand{\ubp}{\textsc{Unary Bin Packing}}
\newtheorem{theorem}{Theorem}
\newtheorem{corollary}{Corollary}
\newtheorem{lemma}{Lemma}
\newtheorem{definition}{Definition}
\title{Decomposing a graph into subgraphs with small components}
\author{Rain Jiang\orcid{0000-0002-0144-942X}\qquad
Kai Jiang\orcid{0000-0001-8165-0571}\qquad
Minghui Jiang\orcid{0000-0003-1843-9292}\,\thanks{\texttt{ dr.minghui.jiang at gmail.com}}\medskip\\
Home School, USA}
\date{}
\begin{document}

\maketitle

\begin{abstract}
The component size of a graph
is the maximum number of edges in any connected component of the graph.
Given a graph $G$ and two integers $k$ and $c$,
$(k,c)$-Decomposition
is the problem of deciding whether
$G$ admits an edge partition into $k$ subgraphs
with component size at most $c$.
We prove that for any fixed $k \ge 2$ and $c \ge 2$,
$(k,c)$-Decomposition
is NP-complete in bipartite graphs.
Also,
when both $k$ and $c$ are part of the input,
$(k,c)$-Decomposition
is NP-complete even in trees.
Moreover,
$(k,c)$-Decomposition
in trees is W[1]-hard with parameter $k$,
and is FPT with parameter $c$.
In addition,
we present approximation algorithms
for decomposing a tree
either into the minimum number of subgraphs with component size at most $c$,
or into $k$ subgraphs minimizing the maximum component size.
En route to these results, we also obtain
a fixed-parameter algorithm for
Bin Packing
with the bin capacity as parameter.
\end{abstract}

\section{Introduction}

Edge coloring is the problem of assigning a color to each edge of a graph such
that no two adjacent edges have the same color.
Any edge coloring of a graph with maximum degree $\Delta$
requires at least $\Delta$ colors.
Vizing's theorem states that
$\Delta + 1$ colors suffice.
The minimum number of colors in any edge coloring of a graph is its
\emph{chromatic index}.
Deciding whether the chromatic index of a graph with maximum degree $\Delta$
is $\Delta$ or $\Delta+1$
is generally a hard computational problem,
although a constructive proof of Vizing's theorem~\cite{MG92}
implies a polynomial-time approximation algorithm with additive error at most $1$.
Holyer~\cite{Ho81} proved that edge coloring cubic graphs
with three colors,
where $\Delta=3$,
is already NP-complete.
Extending this result,
Leven and Galil~\cite{LG83} proved that edge coloring $\Delta$-regular graphs
with $\Delta$ colors is NP-complete for any $\Delta \ge 3$.
Cai and Ellis~\cite{CE91} further proved that
edge coloring $\Delta$-regular line graphs of bipartite graphs
with $\Delta$ colors is NP-complete for any odd $\Delta \ge 3$.
On the other hand,
it is known that
$\Delta$ colors always suffice for edge coloring bipartite graphs;
see~\cite[Proposition 18.1.3]{Ca94} for a simple proof.
Moreover,
edge coloring bipartite graphs (or bipartite multigraphs)
admits an efficient polynomial-time exact algorithm,
even when $\Delta$ is part of the input~\cite{COS01}.

Define the \emph{component size} of a graph as the maximum number of edges
in any connected component of the graph.
In this paper, we study the	following problem that generalizes edge coloring:

\begin{definition}
Given a graph $G$ and two integers $k \ge 2$ and $c \ge 1$,
\kcd\ is the problem of deciding whether
$G$ admits an edge partition into $k$ subgraphs
with component size at most $c$.
\end{definition}

We show that graph decomposition becomes hard, even in bipartite graphs,
when the upper bound on the component size of the resulting subgraphs is
relaxed from $1$ (as in edge coloring) to any fixed $c \ge 2$:

\begin{theorem}\label{thm:k2c2}
For any fixed $k \ge 2$ and $c \ge 2$,
\kcd\ is NP-complete in bipartite graphs.
\end{theorem}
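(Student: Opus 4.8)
The plan is to separate membership from hardness. Membership in NP is immediate: a partition of $E(G)$ into $k$ parts is a polynomial-size certificate, and for each part we compute its connected components and check that every component has at most $c$ edges.

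For hardness I would first settle the smallest case, $k=2$ and $c=2$, and then lift it to all fixed $k\ge 2$ and $c\ge 2$. The base case I would obtain by a polynomial reduction from an NP-complete problem whose symmetry between two choices matches the interchangeable roles of the two color classes --- \textsc{Monotone NAE-3SAT} (equivalently, $2$-colorability of $3$-uniform hypergraphs) is a natural choice. The structural facts I would lean on are: (a) component size at most $c$ forces each color class to have maximum degree at most $c$; and (b) at a vertex $v$ of degree $2c$, each class must take exactly $c$ of the edges at $v$, those $c$ edges already form a component of size exactly $c$ (a star centered at $v$), so none of the $2c$ neighbors of $v$ may carry a second edge of the color it shares with $v$. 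Fact (b) is the engine for building ``wires'' that carry one of the two colors and for building fan-out, while a clause is represented simply by a vertex whose incident literal-edges must not be monochromatic (a monochromatic triple would give a component of size $3>2$); one inserts paths of suitable length where parity demands it so the whole construction stays bipartite. The two directions are then: a not-all-equal truth assignment yields a valid $(2,2)$-decomposition by coloring each variable's wires according to its value; conversely, a valid decomposition, read off along the wires, yields a consistent assignment that must satisfy every clause gadget.

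To pass from $c=2$ to an arbitrary fixed $c\ge 2$, I would rebuild the gadgets ``at capacity $c$'': the role of a two-edge cherry is played by a $c$-edge component and the forcing vertices have degree $2c$, so the arguments above transfer essentially verbatim. To pass from $k=2$ to an arbitrary fixed $k\ge 2$, I would either (i) reduce from the $(2,c)$ case by attaching to every vertex a small color-absorbing gadget that, in any $(k,c)$-decomposition, necessarily occupies the $k-2$ surplus colors so that only colors $1$ and $2$ remain free on the original edges; or (ii) redo the reduction over graph $k$-colorability (NP-complete for $k\ge 3$), with a gadget per input vertex (a degree-$kc$ vertex whose incident edges are forced to realize all $k$ colors) and an ``inequality'' gadget per input edge. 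Either way the host graph remains bipartite.

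The step I expect to be the real obstacle is the gadget design together with its correctness proof --- in particular, showing that each gadget's only valid local colorings are the intended ones and, more delicately, that distinct gadgets compose without some color class developing a large component that straddles a gadget boundary (the component-size bound is a global constraint within each color class, so this ``leakage'' is the genuine risk). Carrying this out uniformly for every fixed pair $(k,c)$, rather than for one specific pair, while simultaneously keeping all cycles even so the graph is bipartite, is where the bulk of the care must go.
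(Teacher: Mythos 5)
Your plan matches the paper's proof in essentially every respect: NP-membership is certified the same way, the $k=2$ case is reduced from $2$-colorability of $3$-uniform hypergraphs (your \textsc{Monotone NAE-3SAT}), the $k\ge 3$ case from graph $k$-colorability (your option (ii)), and your fact (b) --- a degree-$kc$ vertex forces each color class into a saturated star $K_{1,c}$ whose edges cannot be extended by any neighbor --- is exactly the paper's core forcing lemma. The fan-out step you flag as the real obstacle is resolved in the paper by a recursive gadget $H_i$ (take $k-1$ copies of $H_{i-1}$ and attach their outlets to the centers of $c^{i-1}$ copies of $K_{1,c}$), which multiplies one forced outlet into $c^i$ outlets all carrying the same color, with correctness proved by induction on $i$.
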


The \emph{arboricity} of a graph is the minimum number of parts
in an edge partition of the graph such that each part induces a
forest~\cite{Na64}.
The trees in the forests may be restricted to various subclasses of trees,
leading to several related concepts of arboricity.
For example,
the
\emph{linear arboricity}
(respectively, \emph{star arboricity})
of a graph is the minimum number of parts
in an edge partition of the graph such that each part induces
a disjoint union of paths (respectively, stars).
While the arboricity of any graph can be computed
in polynomial time~\cite{GW92},
the arboricities with various restrictions on the trees
often turn out to be hard to compute~\cite{Ji18}.
Our proof of Theorem~\ref{thm:k2c2} shows that
arboricity and star arboricity with bounded component size are also hard to compute:

\begin{corollary}\label{cor:arboricity}
For any fixed $k \ge 2$ and $c \ge 2$,
deciding whether a bipartite graph
admits an edge partition into $k$ forests with component size at most $c$
is NP-complete.
\end{corollary}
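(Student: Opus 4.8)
The plan is to show that the reduction behind Theorem~\ref{thm:k2c2} already establishes the corollary, after verifying one structural property of the instances it outputs. Membership in NP is routine: given a candidate solution one checks in polynomial time that the $k$ parts partition $E(G)$, that each part has component size at most $c$, and that each part is acyclic. For NP-hardness, fix $k\ge 2$ and $c\ge 2$ and let $G$ be the bipartite graph produced by the reduction of Theorem~\ref{thm:k2c2}. The key point is that \emph{$G$ has girth greater than $c$}; granting this, every subgraph $H\subseteq G$ of component size at most $c$ is a forest, because a cycle of $H$ lies inside a single connected component of $H$ and therefore consists of at most $c$ edges, contradicting the girth bound.

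Given the key point, the equivalence is immediate in both directions. A partition of $E(G)$ into $k$ forests of component size at most $c$ is in particular a partition into $k$ subgraphs of component size at most $c$; conversely, any partition into $k$ subgraphs of component size at most $c$ has all parts acyclic by the preceding observation, hence is a partition into $k$ forests. So $G$ is a yes-instance of \kcd\ exactly when it is a yes-instance of the forest-decomposition problem, and the polynomial-time map of Theorem~\ref{thm:k2c2} doubles as a reduction to the latter; combined with membership in NP, this gives NP-completeness. The same instances also settle the star-arboricity variant mentioned above whenever, in addition, every connected subgraph of $G$ with at most $c$ edges has diameter at most $2$ and is thus a star.

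The only delicate step, and the place I expect to spend the most effort, is confirming the girth bound for the concrete construction. When $c=2$ there is nothing to do: in any simple bipartite graph the shortest cycle has at least $4$ edges, so every subgraph of component size at most $2$ is automatically a forest, and the corollary for $c=2$ is Theorem~\ref{thm:k2c2} verbatim. For larger fixed $c$ one inspects the gadgets of the reduction, checking that each is acyclic and that every cycle threading the connecting structure has length exceeding $c$; if the construction as written were to contain a short cycle, one would instead apply a girth-increasing modification (for instance replacing edges by paths of suitable odd length, which preserves bipartiteness) and re-verify the reduction's correctness. I expect the natural construction to satisfy the girth bound with little or no change.
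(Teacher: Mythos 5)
There is a genuine gap: the girth claim on which your whole argument rests is false for $c \ge 4$, because the constructed graphs contain $4$-cycles. For $k \ge 3$, look at $H_2$: it is built from $k-1 \ge 2$ copies of $H_1$, and the $c$ outlets of each copy of $H_1$ form a star (they are the edges of its single copy of $K_{1,c}$, all sharing that copy's center). Each of the $c$ new copies of $K_{1,c}$ in $H_2$ receives, at its center, one outlet from \emph{every} copy of $H_1$. So if $a$ and $b$ are the outlet-star centers of two copies of $H_1$ and $z_1, z_2$ are two of the new centers, then $H_2$ contains the $4$-cycle $a z_1 b z_2 a$. For $c \ge 4$ this cycle fits inside a single component of size at most $c$, so ``component size $\le c$'' does not force acyclicity. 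The $k=2$ construction has the same issue: a hyperedge gadget $K_{1,c+1}$ receives $c+1$ outlets from only three copies of $H_j$, so some copy contributes two, and nothing in the construction prevents those two from belonging to the same outlet star, again creating a $4$-cycle through the $K_{1,c+1}$ center. Your fallback of subdividing edges to raise the girth is not benign here: the correctness of the reduction hinges on Lemma~\ref{lem:H}, whose rigidity argument depends on the gadgets being built from stars $K_{1,kc}$ and $K_{1,c}$ of exactly the right sizes; replacing edges by paths destroys that structure and the counting, so the reduction would have to be redone from scratch.

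The paper needs no girth argument at all. Its Lemma~\ref{lem:k2} proves a cycle of four implications: in the YES direction, the explicitly constructed partition has every component equal to a star $K_{1,c}$ (this is what Lemma~\ref{lem:H} delivers), so it is already a partition into forests of stars --- the \emph{strongest} of the three properties; in the NO direction, the argument assumes only a partition into arbitrary subgraphs of component size at most $c$ --- the \emph{weakest} property. Sandwiching the forest and forest-of-stars versions between these two gives Theorem~\ref{thm:k2c2}, Corollary~\ref{cor:arboricity}, and Corollary~\ref{cor:stararboricity} simultaneously from one reduction. If you want to rescue your plan, replace the girth claim by this observation: you do not need every component-size-$\le c$ subgraph of $G_2$ to be a forest, only that the particular partition built in the YES case consists of forests (of stars), which Lemma~\ref{lem:H} already guarantees.
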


\begin{corollary}\label{cor:stararboricity}
For any fixed $k \ge 2$ and $c \ge 2$,
deciding whether a bipartite graph
admits an edge partition into $k$ forests of stars with component size at most $c$
is NP-complete.
\end{corollary}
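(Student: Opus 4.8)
The plan is to obtain Corollary~\ref{cor:stararboricity} (together with Corollary~\ref{cor:arboricity}) directly from the reduction that proves Theorem~\ref{thm:k2c2}, rather than designing a fresh reduction. Membership in NP is routine: given an edge partition of the input bipartite graph into $k$ parts, one checks in polynomial time that every part is a disjoint union of stars and that no star has more than $c$ edges. So essentially all the work is in the NP-hardness.

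For hardness, I would revisit the bipartite instance $G$ built by the reduction behind Theorem~\ref{thm:k2c2} and inspect the decomposition it produces in the completeness direction. The crucial point to pin down is the following: for a YES instance of the source problem, the $k$ subgraphs exhibited there are not merely of component size at most $c$ but are in fact forests of stars of component size at most $c$. Granting this, the equivalences close up automatically. Any edge partition of $G$ into $k$ forests of stars with component size $\le c$ is, a fortiori, an edge partition into $k$ forests with component size $\le c$, and also a $(k,c)$-decomposition of $G$; hence by the soundness already established for Theorem~\ref{thm:k2c2} it forces the source instance to be a YES instance, while conversely a YES instance yields such a partition. Thus ``$G$ admits a $(k,c)$-decomposition'', ``$G$ decomposes into $k$ forests with component size $\le c$'', and ``$G$ decomposes into $k$ forests of stars with component size $\le c$'' are all equivalent to the source instance being a YES instance, which proves both corollaries at once.

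It then remains only to check the \emph{emphasized claim} that the canonical solution in the hardness construction uses forests of stars. For the base case $c = 2$ this needs no argument: in a simple graph every connected subgraph with at most two edges is a single edge or a path on three vertices, i.e.\ a star with at most two edges, so for $c = 2$ the three decision problems literally coincide on all bipartite graphs. For $c \ge 3$ the only obstruction is a path $P_4$ occurring as a component of some part, and I would go through the gadgets of the reduction, tracking how the edges are assigned to the $k$ color classes in the completeness direction, to confirm that inside each gadget every color class is a disjoint union of stars. If the reduction as originally written happens to route some gadget edges into short paths, the fix is local: reattach those edges at a common hub vertex of the gadget so that the intended coloring becomes star-shaped, a change that leaves the soundness argument inherited from Theorem~\ref{thm:k2c2} untouched.

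The step I expect to be the real obstacle is exactly this last one --- verifying (and, where necessary, minimally re-engineering the gadgets to guarantee) that the intended coloring of a YES instance can always be realized by forests of stars, while preserving the soundness direction. Once that is in hand, NP membership and the chain of implications above are immediate.
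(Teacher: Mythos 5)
Your proposal matches the paper's proof: the paper establishes Corollaries~\ref{cor:arboricity} and~\ref{cor:stararboricity} by the same cycle of implications, noting in Lemma~\ref{lem:k2} that the partition built in the completeness direction is already a pair of forests of stars (which follows because Lemma~\ref{lem:H} forces every component in each $H_j$ gadget to be a star $K_{1,c}$, and the $K_{1,c+1}$ hyperedge gadgets likewise split into stars). No re-engineering of the gadgets is needed; the verification you flag as the potential obstacle goes through directly.
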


The \emph{linear $c$-arboricity} of a graph is the minimum number of parts
in an edge partition of the graph into \emph{linear $c$-forests}, i.e.,
disjoint unions of paths of length at most $c$~\cite{BFHP84}.
Note that linear $1$-arboricity is simply chromatic index.
Thus the aforementioned results on edge coloring~\cite{Ho81,LG83,CE91}
imply that determining the linear $1$-arboricity of a graph is hard.
Bermond, Fouquet, Habib, and Peroche~\cite{BFHP84} proved that
deciding whether a cubic graph has linear $3$-arboricity at most $2$
is NP-complete, and conjectured that determining the linear $c$-arboricity
of a graph is hard for all $c$.
Since linear $2$-arboricity is equivalent to
arboricity and star arboricity with component size at most $2$,
the $c = 2$ case of this conjecture is confirmed by
our Corollary~\ref{cor:arboricity}
and Corollary~\ref{cor:stararboricity}:

\begin{corollary}\label{cor:linear2}
For any fixed $k \ge 2$,
deciding whether a bipartite graph
has linear $2$-arboricity $k$
is NP-complete.
\end{corollary}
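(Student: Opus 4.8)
The plan is to read off the corollary from the hardness results already established, the only substantive step being a one-line structural observation. First I would note that, in a simple graph, every connected subgraph with at most two edges is either a single edge or a path of length $2$; hence the phrases ``forest with component size at most $2$'', ``forest of stars with component size at most $2$'', and ``linear $2$-forest'' all name exactly the same family of graphs, namely disjoint unions of single edges and two-edge paths. Consequently, for every simple graph $G$, the linear $2$-arboricity of $G$ equals the minimum number of parts in an edge partition of $G$ into forests of component size at most $2$, and likewise the minimum number of parts in an edge partition into forests of stars of component size at most $2$.

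With that identification in place, NP membership is immediate: an edge partition of $G$ into $k$ classes is a certificate, and one verifies in polynomial time that each class is a linear $2$-forest. For NP-hardness I would invoke Corollary~\ref{cor:arboricity} (or Corollary~\ref{cor:stararboricity}) with $c = 2$: by the equivalence above, deciding whether a bipartite graph admits an edge partition into $k$ forests of component size at most $2$ is the same as deciding whether its linear $2$-arboricity is at most $k$, and the former is NP-complete by that corollary. (Alternatively one could cite Theorem~\ref{thm:k2c2} itself, since in a \emph{simple} graph any subgraph of component size at most $2$ is automatically acyclic, hence a linear $2$-forest.) The fact that the graphs in question are bipartite is inherited verbatim from the cited reduction.

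The only thing that needs a moment's care is matching the wording ``linear $2$-arboricity $k$'' rather than ``at most $k$''. To handle this I would compose the cited reduction with a trivial padding step: add to the output a vertex-disjoint copy of the star $K_{1,2k-1}$. Since every linear $2$-forest has maximum degree at most $2$, this star by itself has linear $2$-arboricity exactly $k$; as linear $2$-arboricity is additive under disjoint union in the sense that it is the maximum over components, the padded bipartite graph has linear $2$-arboricity exactly $k$ on yes-instances of the original problem and strictly more than $k$ on no-instances. Thus deciding ``linear $2$-arboricity $= k$'' is NP-hard as well. I expect this bit of bookkeeping — not any real difficulty — to be the only obstacle, and once it is dispatched the corollary follows, confirming the $c = 2$ case of the conjecture of Bermond, Fouquet, Habib, and Peroche.
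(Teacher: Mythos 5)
Your proposal is correct and matches the paper's argument: the paper likewise derives the corollary from the one-line observation that a forest (of stars) with component size at most $2$ is exactly a linear $2$-forest, and then cites Corollary~\ref{cor:arboricity} and Corollary~\ref{cor:stararboricity} with $c=2$. Your padding step for the ``exactly $k$'' versus ``at most $k$'' wording is sound but not strictly needed, since the constructed graph $G_2$ already contains $K_{1,kc}$ and hence has linear $2$-arboricity at least $k$ on every instance.
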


For two graphs $G$ and $H$,
an \emph{$H$-decomposition} of $G$ is an edge partition of $G$
into subgraphs isomorphic to $H$.
For any fixed graph $H$,
the problem of deciding whether an input graph $G$ admits an $H$-decomposition
is NP-complete when $H$ has a component of at least three edges~\cite{DT97},
and is polynomially solvable when every component of $H$ has at most two
edges~\cite{BL09}, in contrast to Corollary~\ref{cor:linear2}.

\bigskip
We established in Theorem~\ref{thm:k2c2} that
\kcd\ is NP-complete in bipartite graphs.
It is natural to ask whether the problem remains hard in simpler graphs.
Our next theorem characterizes the complexity of \kcd\ in trees:

\begin{theorem}\label{thm:tree}
When both $k$ and $c$ are part of the input,
\kcd\ in trees is NP-complete,
is W[1]-hard with parameter $k$,
and is FPT with parameter $c$.
\end{theorem}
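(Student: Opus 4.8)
The plan is to treat the three assertions in turn. Membership in NP is clear: an assignment of the edges of $G$ to $k$ classes is a polynomial-size certificate, and component sizes are checkable in polynomial time. For NP-hardness (with $k,c$ in the input) and for W[1]-hardness with parameter $k$ I will exhibit a single polynomial-time reduction from \ubp, which is NP-hard as a classical problem and which, parameterized by the number of bins, is W[1]-hard (Jansen, Kratsch, Marx, and Schlotter); since the reduction keeps the number of colors equal to the number of bins, it serves both purposes at once. For the FPT claim I will pass to an equivalent combinatorial problem on trees and solve it by a bottom-up dynamic program whose merging step at each vertex is an instance of \bp\ with bin capacity $c$, to be dispatched by the fixed-parameter algorithm for \bp\ promised in the abstract.

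For the reduction, from item sizes $s_1,\dots,s_n$ (positive integers in unary), bin capacity $B$, and $k$ bins, I build the depth-$2$ tree $T$ with root $r$, children $w_1,\dots,w_n$ of $r$, and $d_j := s_j+(k-1)B-1$ pendant leaves attached to each $w_j$; I set $c:=B$. If the items fit into $k$ bins of capacity $B$, then for each bin one colors the edges $rw_j$ over the items $j$ in that bin, together with $s_j-1$ pendant edges at each such $w_j$, by that bin's color — the resulting component at $r$ has $\sum s_j\le B$ edges — and one cuts the remaining $(k-1)B$ pendant edges at each $w_j$ into $k-1$ monochromatic stars of $B$ edges; this is a valid $(k,c)$-decomposition. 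Conversely, in any $(k,c)$-decomposition, let $a_j$ be the number of edges at $w_j$ lying in the component through $rw_j$; the other $1+d_j-a_j$ pendant edges at $w_j$ require $\lceil(1+d_j-a_j)/c\rceil$ further stars, so the bound of $k$ pieces at $w_j$ forces $a_j\ge s_j$, while the bound $c$ on component sizes forces $a_j\le c=B$ and forces every component through $r$ — a disjoint union, over the $w_j$ it reaches, of $a_j$ edges — to have $\sum a_j\le B$. Hence the components through $r$ pack values $a_j\ge s_j$ into at most $k$ bins of capacity $B$, so the items fit. Thus $T$ is $(k,c)$-decomposable iff the \ubp\ instance is a yes-instance; the tree has $O(\sum s_j+nkB)$ vertices, polynomial in the unary input, and uses exactly $k$ colors, which yields both hardness statements.

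For the FPT algorithm I will use the following equivalence: in a tree, a $(k,c)$-decomposition is the same thing as an edge partition into subtrees, each with at most $c$ edges, in which every vertex belongs to at most $k$ of the subtrees. One direction is immediate; for the other, note that the components of the color classes are pairwise edge-disjoint subtrees of $T$, so the graph in which two such pieces are adjacent exactly when they share a vertex is the intersection graph of subtrees of a tree, hence chordal; its chromatic number equals its clique number, which — by the Helly property of subtrees of a tree — equals the maximum, over vertices $v$, of the number of pieces containing $v$. So it suffices to decide the subtree-partition problem. Root $T$ and keep a table $g(v,s)$, $s\in\{0,\dots,c\}$, recording whether $E(T_v)$ admits such a partition in which the piece destined to absorb the parent edge of $v$ has size $s$ (with the degree budget at $v$ decremented appropriately when $s=0$ and the root handled separately); this is $O(c)$ states per vertex. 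The transition at a vertex with children $u_1,\dots,u_d$ selects for each child a compatible piece-size in $\{1,\dots,c\}$ and then groups the chosen sizes into at most $k$ bundles of total size at most $c$, one bundle singled out as the one extending to the parent — an instance of \bp\ with capacity $c$ (with a small adjustment for the singled-out bundle and for the per-child choices). Running the fixed-parameter \bp\ algorithm at each vertex gives total time $f(c)\cdot\mathrm{poly}(n)$.

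The principal obstacle is the fixed-parameter algorithm for \bp\ parameterized by the bin capacity, which is developed on its own and which the abstract advertises as an "en route" result; granting it, the tree dynamic program is routine apart from the bookkeeping of open-piece sizes and the singled-out bundle. On the hardness side the one delicate step is the calibration $d_j=s_j+(k-1)B-1$: the pendant edges at $w_j$ must function as a counter that, together with the bound of $k$ pieces at $w_j$, forces $a_j\ge s_j$ exactly, so that the local degree constraints at the $w_j$ and the capacity constraint at $r$ jointly simulate \bp\ faithfully.
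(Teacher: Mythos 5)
Your proposal is correct and, for the hardness half, essentially identical to the paper's: the paper also reduces from \ubp\ (citing Jansen et al.\ for W[1]-hardness in $k$), builds for each item a star with $(k-1)c+w_i$ edges and merges one leaf from each star into a common center, and runs the same counting argument (the $k-1$ colors other than that of the edge to the center can absorb at most $(k-1)c$ of the remaining edges at a star's hub, forcing at least $w_i-1$ of them into the center edge's component, while the component through the center caps each bin at $c$). For the FPT half you arrive at the same algorithm --- a post-order dynamic program whose merge step at each vertex is a \bp\ instance with capacity $c$, dispatched by Theorem~\ref{thm:fpt} --- but you justify its correctness differently: the paper works directly with the quantity $s(e)$, the minimum size of the component containing $e$ over all valid $k$-colorings of the subtree $T_e$, whereas you first recast a $(k,c)$-decomposition of a tree as an edge partition into subtrees of size at most $c$ with at most $k$ subtrees through any vertex, and recover the $k$-coloring via chordality of subtree intersection graphs, perfection, and the Helly property. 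That reformulation is a clean way to make explicit the color-permutation argument that the paper leaves implicit, and it buys a constructive coloring at the end. One point you should still nail down: as written, your transition ``selects for each child a compatible piece-size'' before bin packing, which is an exponential enumeration over children unless you observe (as the paper does by defining $s(e)$ as a minimum) that it is always safe to take the smallest feasible piece size at each child; with that monotonicity remark the merge collapses to a single \bp\ call per vertex (plus a binary search over $s$) and the claimed $f(c)\cdot n^{O(1)}$ bound follows.
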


Theorem~\ref{thm:tree} implies that
for any fixed $c$, \kcd\ in trees can be solved in polynomial time.
For the related arboricity problem of deciding whether a tree admits
an edge partition into $k$ linear $c$-forests,
Chang, Chen, Fu, and Huang~\cite{CCFH00} presented an algorithm that runs in
polynomial time when $c$ is fixed.

\bigskip
The decision problem \kcd\ may be extended to two different optimization problems:

\begin{definition}
Given a graph $G$ and an integer $c \ge 1$, \mcd\
is the problem of decomposing $G$ into
the minimum number of subgraphs with component size at most $c$.
\end{definition}

\begin{definition}
Given a graph $G$ and an integer $k \ge 2$, \msd\
is the problem of decomposing $G$ into
$k$ subgraphs minimizing the maximum component size.
\end{definition}

Theorem~\ref{thm:k2c2} implies that \mcd\ for any fixed $c \ge 2$
and \msd\ for any fixed $k \ge 2$ are APX-hard
and inapproximable within a factor of $3/2$
in bipartite graphs.
The previous results on edge coloring~\cite{Ho81,LG83,CE91}
imply that \mcd\ for $c = 1$ is APX-hard
and inapproximable within a factor of $4/3$ in general graphs.
In the next two theorems,
we show that both minimization problems can be approximated very well in
trees:

\begin{theorem}\label{thm:mcd}
There is a linear-time approximation algorithm for \mcd\ in trees
that finds an edge
partition of any tree $T$ into at most $k^* + 1$ subgraphs
with component size at most $c$,
where $k^*$ is the smallest number such that
$T$ can be decomposed into $k^*$ subgraphs with component size at most $c$.
\end{theorem}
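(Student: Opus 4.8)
The plan is to combine a simple degree lower bound on $k^*$ with an explicit linear-time decomposition obtained by partitioning the edges of $T$ into stars and then properly coloring those stars. Write $d(v)$ for the degree of a vertex $v$ and put $L=\max_v\lceil d(v)/c\rceil$; I will show $L\le k^*$ and construct a valid decomposition using at most $L+1$ subgraphs, which gives the theorem. For the lower bound, fix any valid decomposition into $k$ subgraphs and a vertex $v$: for each subgraph, all of its edges incident to $v$ lie in a single connected component of that subgraph, which has at most $c$ edges, so the subgraph uses at most $c$ edges at $v$; summing over the $k$ subgraphs gives $d(v)\le kc$, hence $k\ge\lceil d(v)/c\rceil$ for every $v$, and therefore $k^*\ge L$. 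The quantity $L$ is computable in linear time.

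For the decomposition, root $T$ arbitrarily and let $d^{\downarrow}(v)$ denote the number of children of $v$. Partition the $d^{\downarrow}(v)$ edges from $v$ to its children into $\lceil d^{\downarrow}(v)/c\rceil$ groups of at most $c$ edges each, and call each such group a \emph{star at $v$} --- a connected subtree with at most $c$ edges. Taken over all $v$, these stars partition $E(T)$, and they are produced in $O(|V(T)|)$ time. The key point is that every vertex $w$ of $T$ lies in at most $\lceil d^{\downarrow}(w)/c\rceil+1\le\lceil d(w)/c\rceil+1\le L+1$ stars: the stars centered at $w$ itself, together with at most one further star, namely the unique star centered at the parent of $w$ that contains the edge joining $w$ to its parent (and there is no such extra star when $w$ is the root).

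Now declare two stars to \emph{conflict} if they share a vertex of $T$. If conflicting stars always receive different colors, then assigning each edge the color of its star yields a valid decomposition, since each color class becomes a vertex-disjoint union of stars and hence has every connected component of size at most $c$. By the previous paragraph no vertex of $T$ lies in more than $L+1$ stars; moreover the conflict graph is the intersection graph of a family of subtrees of the tree $T$, so it is chordal with clique number equal to this maximum, and is therefore properly colorable with $L+1$ colors. Such a coloring can be computed directly in linear time by processing vertices in order of increasing depth: at a vertex $v$, if the star centered at the parent of $v$ that contains the edge to $v$ has already been colored $j_0$ (there is no such $j_0$ when $v$ is the root), assign the stars centered at $v$ pairwise distinct colors drawn from $\{1,\dots,\lceil d^{\downarrow}(v)/c\rceil+1\}\setminus\{j_0\}$. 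Two conflicting stars are either centered at the same vertex --- in which case they receive distinct colors by construction --- or else one of them is the parent star of the other --- in which case the child's stars all avoid $j_0$; so the coloring is proper, uses at most $L+1\le k^*+1$ colors, and is obtained in $O(|V(T)|)$ total time.

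The real content of the argument --- and the only step that needs any care --- is the decision to center the pieces at \emph{parents} rather than at children: this is precisely what forces the number of stars through a vertex $w$ down to $\lceil d^{\downarrow}(w)/c\rceil+1$, and hence the number of colors down to $(\max_v\lceil d(v)/c\rceil)+1\le k^*+1$. Verifying this per-vertex count and the properness of the linear-time coloring is the main obstacle, and it is a mild one; the degree lower bound and the reduction of the decomposition to a star-coloring are routine.
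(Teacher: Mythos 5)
Your proposal is correct and is essentially the paper's own proof: both root the tree, greedily group the child edges at each vertex into stars of at most $c$ edges colored differently from the edge to the parent (so at most $\lceil(\Delta-1)/c\rceil+1$ colors suffice), and compare against the degree lower bound $\lceil\Delta/c\rceil=\max_v\lceil d(v)/c\rceil\le k^*$. The extra machinery about chordality of the conflict graph is harmless but unnecessary, since the direct top-down coloring you describe (and the paper uses) already does the job.
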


\begin{theorem}\label{thm:msd}
There is a polynomial-time approximation scheme for \msd\ in trees
that finds an edge partition of any tree $T$ into $k$ subgraphs with
component size at most $(1+\epsilon)c^*$,
where $c^*$ is the smallest number such that
$T$ can be decomposed into $k$ subgraphs with component size at most $c^*$.
\end{theorem}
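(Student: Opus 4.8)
The plan is to reduce \msd\ in trees, by binary search, to an approximate feasibility test. For a tree $T$ on $n$ vertices and a target component size $c\in\{1,\dots,n-1\}$, I want a procedure $\mathrm{TEST}(c)$ such that (a) if $T$ admits an edge partition into $k$ subgraphs with component size at most $c$, then $\mathrm{TEST}(c)$ outputs such a partition with component size at most $(1+\epsilon)c$, and (b) whenever $\mathrm{TEST}(c)$ outputs a partition, that partition has component size at most $(1+\epsilon)c$. Since $\mathrm{TEST}(n-1)$ trivially succeeds and $\mathrm{TEST}(c^*)$ succeeds by (a), binary searching for the smallest $c$ at which $\mathrm{TEST}$ succeeds and returning the partition found there yields component size at most $(1+\epsilon)c^*$. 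It then suffices to implement $\mathrm{TEST}$ in time polynomial in $n$ for each fixed $\epsilon$; we may assume $k\le n-1$, as otherwise $c^*=1$ and the problem is trivial.

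The core subroutine is an exact rooted-tree dynamic program that decides whether $T$ has an edge partition into $k$ subgraphs with component size at most $c$. Root $T$; for a non-root vertex $v$ with parent edge $e_v$, let the \emph{stub} at $v$ be the monochromatic component containing $e_v$ within $T_v\cup\{e_v\}$, and let $F_v\subseteq\{1,\dots,c\}$ be the set of stub sizes realizable by colorings of $T_v$ in which every closed-off component has at most $c$ edges. Since colors are interchangeable, $F_v$ depends on the children only through their sets $F_u$: one picks a realizable stub size for each child, packs these sizes into $k$ bins of capacity $c$, increases one chosen bin by $1$ (for $e_v$), and records that bin's load. The state of this bin-packing-style merge is merely the multiset of the $k$ bin loads, of which there are $\binom{k+c}{c}\le\min\{(k+1)^c,(c+1)^k\}$, so the program runs in $\binom{k+c}{c}\cdot\mathrm{poly}(n)$ time. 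This is polynomial whenever $\min\{k,c\}\le\lceil1/\epsilon\rceil$, and in that regime $\mathrm{TEST}(c)$ just runs it, exactly, with bound $c$.

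When $\min\{k,c\}>\lceil1/\epsilon\rceil$ we exploit the multiplicative slack. Fix a coarse granularity $\delta=\Theta(\epsilon^2 c)$, round every stub size up to the nearest element of the set $\mathcal V$ of multiples of $\delta$ up to $(1+\tfrac\epsilon2)c$ (so $|\mathcal V|=O(1/\epsilon^2)$), and run the same program; now the multiset merge has only $\binom{k+|\mathcal V|}{|\mathcal V|}=O(k^{O(1/\epsilon^2)})$ states, so it runs in polynomial time. The hazard is that rounding, applied at every merge, accumulates — along a root-to-leaf path of the ``merge tree'' of one monochromatic component the errors add, and a bushy component forces $\Omega(c)$ merges. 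I would neutralize this by (i) processing each maximal path of degree-$2$ vertices of $T$ in a single bulk step, carrying the exact $+1$ increments along it so that no rounding occurs in the interior of a chain, and (ii) proving a structural lemma: if $T$ has an edge partition into $k$ subgraphs with component size at most $c$, then it has one with component size at most $(1+\tfrac\epsilon2)c$ in which every monochromatic component has at most $O(1/\epsilon)$ branch vertices. Given such a partition the rounded program performs only $O(1/\epsilon)$ roundings per component, each of error below $\delta$, hence total error $O(\delta/\epsilon)=O(\epsilon c)\le\tfrac\epsilon2c$; so it accepts, and — since we always round up — the partition it reconstructs really has component size at most $(1+\epsilon)c$, establishing (a) and (b).

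The structural lemma is the step I expect to be hardest. The surgery I have in mind repeatedly splits a monochromatic component with too many branch vertices into sub-pieces having $O(1/\epsilon)$ branch vertices each, recoloring one ``connector'' edge at each split; over the whole partition this recolors only an $O(\epsilon)$-fraction of the edges. The delicate part is to recolor each connector $e=\{x,y\}$ with a color $j$ whose current monochromatic components at $x$ and at $y$ are both of size $O(\epsilon c)$ (or empty), so that merging $e$ into color $j$ enlarges a component by only $O(\epsilon c)$: one must show such a $j$ always exists when $\min\{k,c\}>\lceil1/\epsilon\rceil$ — morally because few of the $k$ colors can carry a large component through a fixed vertex — and then control the cumulative effect of all the absorptions. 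Making this counting argument robust for every admissible $k$ and $c$ (splitting off an intermediate range of $k$ for the exact program, if needed) is the real work; the binary search, the exact multiset program, and the rounded variant are routine once the lemma is available.
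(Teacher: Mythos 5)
Your high-level frame --- binary search over the target size $c$, a bottom-up DP on realizable ``stub'' sizes of parent edges, and a bin-packing-style merge at each vertex --- matches the paper's. The divergence, and the gap, is in how the per-vertex merge is made polynomial when both $k$ and $c$ are large. You round the stub values themselves to a grid of granularity $\delta=\Theta(\epsilon^2 c)$ and then need to stop the per-merge errors from compounding; this is exactly the step you concede is unproved (the structural lemma), and even granting the lemma the accounting does not close. First, your DP rounds at every internal vertex of $T$ at which a color's load is updated, not only at branch vertices of the monochromatic component: a component that is a path of length $\Theta(c)$ threading through high-degree vertices of $T$ has no branch vertices at all, yet incurs $\Theta(c)$ roundings; fix (i) exempts only degree-$2$ vertices of $T$ and the lemma bounds branch vertices of the component, so neither applies. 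Second, rounding every stub \emph{up} to a multiple of $\delta$ already breaks completeness for small stubs: a vertex with $c$ leaf children all given one color is a feasible component of size $c$, but each unit stub rounds up to $\Theta(\epsilon^2 c)$ and the rounded total is $\Theta(\epsilon^2c^2)\gg c$. Patching this requires the standard small/large item separation, at which point you are rebuilding a bin-packing approximation scheme inside the DP without a correctness proof.

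The idea you are missing is to keep the per-edge DP values \emph{exact} and localize all of the approximation inside one vertex, so that nothing accumulates. The paper records for each edge $e$ an exact integer $\tilde s(e)\in[1,c]$ subject to a two-sided condition: $T_e$ admits a partition with component size at most $(1+\epsilon)c$ putting $e$ in a component of size at most $\tilde s(e)$, and $T_e$ admits \emph{no} partition with component size at most the exact bound $c$ putting $e$ in a component of size at most $\tilde s(e)-1$. The second condition forces $\tilde s(e)\le s(e)$, the true optimum, so feasibility of the exact instance propagates up the tree with no degradation; the first only has to hold at the relaxed bound. The merge at a vertex is then literally a \bp{} instance (child stubs as items, bins of capacity $c$), decided by a dual approximation algorithm (Theorem~\ref{thm:dual}) that either packs into bins of capacity $(1+\epsilon)c$ or certifies infeasibility at capacity $c$; that algorithm rests on the fixed-parameter algorithm for \bp{} with the bin capacity as parameter (Theorem~\ref{thm:fpt}), whose Carath\'eodory-type counting over bin patterns plays the role your coarse grid was meant to play, but with a proof. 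Without either that machinery or a correct proof of your structural lemma plus a repaired rounding scheme, the proposal does not establish the theorem.
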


Kleinberg, Motwani, Raghavan, and Venkatasubramanian~\cite{KMRV97}
introduced
\kcfc,
the problem of vertex partitioning a graph into $k$ induced subgraphs
such that the number of vertices in each component of each subgraph
is at most $c$.
Our problem \kcd\ is equivalent to \kcfc\ in line graphs,
since edge partitioning a graph $G$ is the same as
vertex partitioning the line graph $L(G)$.

The problem \kcfc\ in general graphs
is easily shown to be NP-hard for any $k \ge 2$ and $c \ge 2$,
because the graph property of at most $c$ vertices in each component
is additive and induced-hereditary,
while vertex partitioning into fixed additive induced-hereditary properties is
NP-hard in general~\cite{Fa04},
except the special case of bipartite testing,
i.e.,
vertex partitioning into two independent sets.
Also, the problem is trivial in bipartite graphs (and hence in trees),
since every bipartite graph has a vertex partition into two induced subgraphs in which
every component has only one vertex.

\bigskip
Our proofs of Theorem~\ref{thm:tree},
Theorem~\ref{thm:mcd}, and Theorem~\ref{thm:msd}
are based on a close relationship
between \kcd\ in trees and the classical problem of \bp:

\begin{definition}\label{def:bp}
Given $n$ items of integer weights $w_i$, $1 \le i \le n$,
and $k$ bins of integer capacity $c$,
\bp\ is the problem of deciding
whether the $n$ items can be packed into the $k$ bins,
that is, whether the set of $n$ items
can be partitioned into $k$ subsets,
such that the total weight of the items in each subset is at most $c$.
\ubp\ is the version of \bp\ where all integers in the problem instance are encoded in unary.
\end{definition}

It is well-known that \bp\ is strongly NP-hard~\cite{GJ79},
and hence \ubp\ is NP-hard,
when both $k$ and $c$ are part of the input.
In terms of parameterized complexity,
Jansen et~al.~\cite{JKMS13} proved that \ubp,
even with the condition $\sum_{i=1}^n w_i = k c$,
is already W[1]-hard with parameter $k$.
In the following theorem,
we obtain a fixed-parameter algorithm for \bp\ with parameter $c$:

\begin{theorem}\label{thm:fpt}
\bp\ admits an exact algorithm running in
$2^{O(c^{3/2})}N^{O(1)}$ time,
where $N$ is the size of the problem instance encoded in binary.
\end{theorem}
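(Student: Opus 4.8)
The plan is to reduce \bp\ to the problem of deciding whether a multiset of integers can be partitioned into $k$ parts each summing to at most $c$, and to exploit the fact that the number of \emph{distinct} item weights that matter is bounded in terms of $c$ alone. First I would observe that any item of weight exceeding $c$ makes the instance a trivial "no", and that items of weight $0$ can be discarded; so I may assume every weight lies in $\{1,\dots,c\}$. Thus the input, after this preprocessing, is described by a vector of multiplicities $(m_1,\dots,m_c)$, where $m_j$ is the number of items of weight $j$, and the total input size $N$ dominates $\log(m_1+\dots+m_c)$ and $\log c$.

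The key combinatorial object is the set of \emph{bin types}: a bin type is a vector $(a_1,\dots,a_c)$ of nonnegative integers with $\sum_j j\,a_j \le c$, recording how many items of each weight sit in one bin. The crucial bound is that the number of such bin types is $2^{O(c^{3/2})}$: this follows from the classical Hardy--Ramanujan estimate on the number of partitions of integers at most $c$, since a bin type is essentially a partition of some integer $s \le c$ (the exponent $\sqrt{c}$ in $p(c) = 2^{\Theta(\sqrt{c})}$ absorbs into $c^{3/2}$ with room to spare, and in fact one gets the cleaner exponent $O(\sqrt{c})$; I would state it as $2^{O(c^{3/2})}$ to keep a comfortable margin and to match the theorem). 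Enumerate all bin types as $\bs{t}_1,\dots,\bs{t}_p$ with $p = 2^{O(c^{3/2})}$.

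Next I would set up an integer linear program (or equivalently an ILP feasibility question): introduce a variable $x_\ell \ge 0$ for each bin type $\bs{t}_\ell$, meaning the number of bins packed according to type $\bs{t}_\ell$, and impose $\sum_\ell x_\ell \le k$ together with the covering constraints $\sum_\ell x_\ell\, (\bs{t}_\ell)_j = m_j$ for each weight class $j = 1,\dots,c$. The instance is a "yes" instance precisely when this system has an integral solution. This ILP has $p = 2^{O(c^{3/2})}$ variables and $c+1$ constraints, and all coefficients are bounded by $c$ while the right-hand sides are bounded by the input. I would then invoke Lenstra's algorithm for integer programming in fixed dimension, or rather its refinement (Kannan; Frank--Tardos) whose running time is $p^{O(p)}$ times a polynomial in the bit-length of the data; since $p = 2^{O(c^{3/2})}$, this is $2^{O(c^{3/2}\log c) }N^{O(1)}$, which after adjusting constants in the exponent is $2^{O(c^{3/2})}N^{O(1)}$ — here the slack between $\sqrt{c}\log c$ and $c^{3/2}$ is exactly what makes the arithmetic clean, and is presumably why the authors state the bound with exponent $c^{3/2}$ rather than the tighter $\sqrt{c}\log c$.

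The main obstacle, and the part I would be most careful about, is the bookkeeping in the final running-time calculation: one must check that the ILP-solving step's dependence on the number of variables $p$ is genuinely of the form $p^{O(p)}\cdot\mathrm{poly}(N)$ (so that $\log p = O(c^{3/2}\log c)$ feeds into the exponent correctly), and that the coefficient bounds in the ILP do not introduce an extra non-polynomial factor — they do not, since Lenstra-type bounds are polynomial in the encoding length of the coefficients. A secondary point to verify is that enumerating the $p$ bin types can itself be done within the claimed time, which is immediate since one simply lists all partitions of each integer $s \le c$. Everything else — the preprocessing, the translation to multiplicities, the correctness of the ILP formulation — is routine.
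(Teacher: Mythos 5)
Your setup (reduce to multiplicities, observe that the relevant combinatorial objects are the $2^{O(\sqrt{c})}$ bin patterns, and formulate feasibility as an ILP with one variable per pattern) matches the paper's framing, but there is a genuine gap in the final step, and it is exactly the step you flagged as needing care. Lenstra/Kannan-type algorithms run in time $p^{O(p)}$ where $p$ is the \emph{number of variables}. Your ILP has $p = \Theta(p(c)) = 2^{\Theta(\sqrt{c})}$ variables, so $p^{O(p)} = 2^{O(p\log p)} = 2^{2^{\Theta(\sqrt{c})}}$, which is doubly exponential in $\sqrt{c}$ --- nowhere near $2^{O(c^{3/2})}$. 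Your claim that this evaluates to $2^{O(c^{3/2}\log c)}N^{O(1)}$ treats $p^{O(p)}$ as if it were $p^{O(1)}$ or $2^{O(\log^2 p)}$; the arithmetic does not go through. Relatedly, your guess that the exponent $c^{3/2}$ is just ``comfortable margin'' over $\sqrt{c}$ is a misdiagnosis: the $c^{3/2}$ is doing real work, as explained next.

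The missing idea is a Carath\'eodory-type support bound for integer cones (Eisenbrand--Shmonin), which the paper proves as Lemma~\ref{lem:cone}: if the multiplicity vector $\boldsymbol{a}$ is an integer conical combination of patterns at all, it is one supported on at most $d(c) = O(c)$ \emph{distinct} patterns (the proof is a pigeonhole argument on the $2^d$ subset-sums of any support of size $d$ versus the $\sum_{h\le d} p(hc) = 2^{O(\sqrt{dc})}$ possible values). One then enumerates the $\binom{p(c)}{O(c)} = (2^{O(\sqrt{c})})^{O(c)} = 2^{O(c^{3/2})}$ candidate supports --- this is where the exponent $c^{3/2} = \sqrt{c}\cdot c$ actually comes from --- and for each one runs Lenstra/Kannan on an ILP with only $O(c)$ variables, costing $c^{O(c)}N^{O(1)}$ per call. (The paper also pads with weight-$1$ items so that $\sum_w w\,a_w = kc$ exactly, which makes the bin-count constraint automatic; your inequality formulation avoids this but is not the source of the problem.) Without some such reduction of the variable count --- or, alternatively, an ILP algorithm parameterized by the number of \emph{constraints} rather than variables, which would still not yield the stated $2^{O(c^{3/2})}$ bound --- your argument does not establish the theorem.
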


Hochbaum and Shmoys~\cite{HS87}
introduced the notion of \emph{dual approximation algorithms}
that approximate the feasibility of a problem rather than its optimality.
The existence of dual approximation algorithms for \bp,
similar to the one in the following theorem,
are known~\cite[Sections 9.3.2.1 and 9.3.2.2]{Ho97}.
We obtain a simple alternative via Theorem~\ref{thm:fpt}.

\begin{theorem}\label{thm:dual}
\bp\ admits a dual approximation algorithm
that, given $n$ items and $k$ bins of capacity $c$,
and given $0 < \epsilon \le 1$,
either
decides that the $n$ items can be packed into $k$ enlarged bins
of capacity $(1+\epsilon)c$ and returns yes,
or
decides that the $n$ items cannot be packed into $k$ bins of capacity $c$
and returns no,
in
$2^{O(\epsilon^{-3})}N^{O(1)}$ time,
where $N$ is the size of the problem instance encoded in binary.
\end{theorem}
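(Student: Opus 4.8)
The plan is to reduce the given instance to one whose bin capacity is a constant $O(\epsilon^{-2})$ and then invoke the exact algorithm of Theorem~\ref{thm:fpt}. After rescaling $\epsilon$ by a suitable constant we may assume $\epsilon \le 1/2$; we may also assume $w_i \le c$ for all $i$ and $\sum_{i} w_i \le kc$, since otherwise the instance is plainly infeasible and we output no (which is a legitimate no). If $c \le 2\epsilon^{-2}$, simply run the algorithm of Theorem~\ref{thm:fpt} directly: it is exact and runs in $2^{O(c^{3/2})}N^{O(1)} = 2^{O(\epsilon^{-3})}N^{O(1)}$ time. So assume $c > 2\epsilon^{-2}$, and set $q = \lfloor \epsilon^2 c\rfloor$, so that $\tfrac12\epsilon^2 c \le q \le \epsilon^2 c$.

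Call an item \emph{big} if $w_i > \epsilon c$ and \emph{small} otherwise. Round each big weight down to a multiple of $q$ and divide through by $q$: this produces a \bp\ instance $I'$ on the big items with integer weights $a_i = \lfloor w_i/q\rfloor$ and bin capacity $C = \lfloor c/q\rfloor = O(\epsilon^{-2})$. Apply Theorem~\ref{thm:fpt} to $I'$. If it reports infeasible, output no. Otherwise take a packing of the big items into the $k$ bins of capacity $C$, lift it back (restoring each big item to its original weight $w_i$), and then insert the small items greedily by First Fit into the $k$ bins regarded as having capacity $(1+\epsilon)c$; finally output yes.

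For correctness I would argue two directions. If the original items fit into $k$ bins of capacity $c$, then inside each such bin the big items alone have total weight at most $c$, so after rounding down and dividing by $q$ their scaled weights sum to at most $\lfloor c/q\rfloor = C$; hence $I'$ is feasible and the algorithm never wrongly outputs no. Conversely, suppose $I'$ is feasible and fix a packing of the scaled big items into $k$ bins of capacity $C$. Each big item has scaled weight $a_i = \lfloor w_i/q\rfloor = \Omega(\epsilon^{-1})$ because $w_i > \epsilon c$ and $q \le \epsilon^2 c$, while $C = O(\epsilon^{-2})$, so each bin holds only $O(\epsilon^{-1})$ big items. Lifting a big item increases its weight by less than $q \le \epsilon^2 c$, so the big items assigned to a bin have total real weight at most $qC + O(\epsilon^{-1})\cdot\epsilon^2 c = c + O(\epsilon c)$. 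Enlarging every bin to capacity $(1+\epsilon)c$ (legitimate after the initial constant rescaling of $\epsilon$) therefore leaves room for the small items; and since every small item weighs at most $\epsilon c$ while $\sum_i w_i \le kc$, First Fit cannot get stuck — otherwise every bin would retain less than $\epsilon c$ of free space, forcing the total weight above $kc$. Thus a yes output is always backed by an explicit packing into $k$ bins of capacity $(1+\epsilon)c$.

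The running time is dominated by the calls to the algorithm of Theorem~\ref{thm:fpt}: either one call with capacity $c \le 2\epsilon^{-2}$ in the easy case, or one call with capacity $C = O(\epsilon^{-2})$ in the main case, each costing $2^{O(\epsilon^{-3})}N^{O(1)}$; the rounding, lifting, and First Fit steps are polynomial. I expect the only real obstacle to be bookkeeping the constants so that the several $O(\epsilon)$ slacks — from rounding the big items, from the First Fit step, and from the initial rescaling — combine into the clean final guarantee of capacity $(1+\epsilon)c$; the substantive content is entirely in Theorem~\ref{thm:fpt} together with the standard large/small-item split.
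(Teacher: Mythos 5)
Your proposal is correct and follows essentially the same route as the paper's proof: split the items at the threshold $\epsilon c$, round the large weights down to a grid of granularity $\Theta(\epsilon^2 c)$ so the scaled capacity becomes $O(\epsilon^{-2})$, invoke Theorem~\ref{thm:fpt}, bound the per-bin rounding error by ($O(\epsilon^{-1})$ large items per bin)$\times$($O(\epsilon^2 c)$ error each) $= O(\epsilon c)$, and pack the small items greedily using $\sum_i w_i \le kc$. The only differences are cosmetic: you use an integer grid $q=\lfloor\epsilon^2 c\rfloor$ with a separate easy case for $c\le 2\epsilon^{-2}$ and absorb constants by rescaling $\epsilon$, whereas the paper rounds to multiples of $\epsilon^2 c/2$ and tracks the constants exactly.
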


Throughout the paper, the \emph{size} of a graph is the number of edges in it.
We may refer to an edge partition of a graph into $k$ subgraphs
and the corresponding $k$-color assignment to the edges interchangeably.

\section{Hardness of decomposing bipartite graphs}

In this section we prove Theorem~\ref{thm:k2c2},
Corollary~\ref{cor:arboricity},
and
Corollary~\ref{cor:stararboricity}.
Fix any $k \ge 2$ and $c \ge 2$.
The problem \kcd\ and the related arboricity problems are clearly in NP\@.
We prove their NP-hardness in bipartite graphs in the following.

\subsection{Building blocks $H_i$}

The building blocks of our construction are a series of bipartite graphs
$H_i = H_i(k,c)$ for $i \ge 0$.
$H_0$ is the star $K_{1,kc}$ with one edge designated as an \emph{outlet}.
For $i > 0$, $H_i$ has $c^i$ outlets and is constructed recursively as follows.
Take $k-1$ disjoint copies of $H_{i-1}$
and $c^{i-1}$ disjoint copies of $K_{1,c}$.
Attach a distinct outlet from each copy of
$H_{i-1}$
to the center of each copy of $K_{1,c}$.
Then designate the edges of all copies of $K_{1,c}$ in the resulting graph
$H_i$
as its outlets.
Refer to Figures~\ref{fig:k2c3} and~\ref{fig:k3c2} for some examples.

It is easy to verify that $H_i$ is bipartite.
Since it includes $(k-1)^i$ copies of $K_{1,kc}$
and $\sum_{j = 1}^i c^{j-1}(k-1)^{i - j}$ copies of $K_{1,c}$,
its size is at most
\begin{equation}\label{eq:H}
(k-1)^i kc + \sum_{j = 1}^i c^j(k-1)^{i - j}
\le
(k-1)^i kc + (c+k)^i
\le
(c+k)^i(kc + 1)
\le
(c+k)^{i+2}.
\end{equation}

\begin{figure}[htbp]
\centering\includegraphics{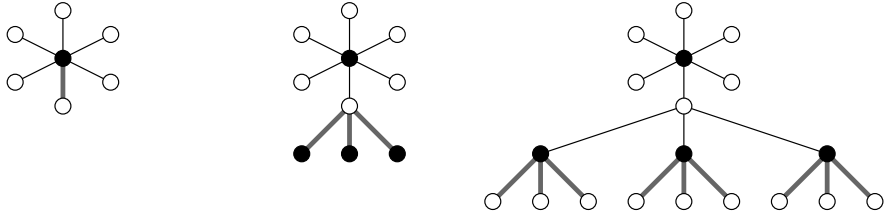}
\caption{$H_0$, $H_1$, and $H_2$ for $k = 2$ and $c = 3$.}
\label{fig:k2c3}
\end{figure}

\begin{figure}[htbp]
\centering\includegraphics{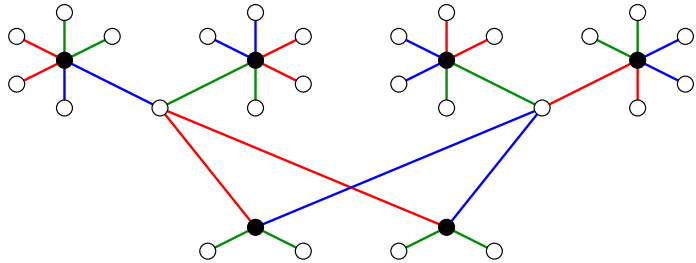}
\caption{$H_2$ for $k = 3$ and $c = 2$.}
\label{fig:k3c2}
\end{figure}

\begin{lemma}\label{lem:H}
$H_i$ admits an edge partition into $k$ subgraphs
with component size at most $c$.
Moreover,
in any edge partition of $H_i$ into $k$ subgraphs
with component size at most $c$,
each component must be a star $K_{1,c}$,
and the $c^i$ outlets must belong to the same subgraph.
\end{lemma}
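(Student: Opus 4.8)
The plan is to establish both parts of Lemma~\ref{lem:H} together by induction on $i$, so that the full two-part statement is available as an inductive hypothesis. When invoking it on a copy of $H_{i-1}$ I will use it in the following packaged form: in any edge partition of $H_{i-1}$ into $k$ subgraphs of component size at most $c$, all $c^{i-1}$ outlets receive one common color, and each outlet is an edge of a monochromatic $K_{1,c}$ component in which the degree-$1$ endpoint of the outlet — the vertex that is glued when $H_i$ is built — appears as a leaf rather than the center (a degree-$1$ vertex cannot be the center of a $K_{1,c}$ once $c\ge2$). The base case $i=0$ is immediate: in $H_0=K_{1,kc}$ every color class is a star at the center with at most $c$ edges, so the $k$ classes, partitioning $kc$ edges, are each exactly $K_{1,c}$, and the lone outlet is trivially in the same subgraph as itself; a valid partition obviously exists.

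For the existence half of the inductive step I would take valid partitions of the $k-1$ copies of $H_{i-1}$, permute colors (using the inductive hypothesis) so that copy $\ell$ has all its outlets colored $\ell$ for $\ell=1,\dots,k-1$, and color every edge of every $K_{1,c}$-copy with color $k$. At the center $u$ of a $K_{1,c}$-copy the color-$k$ edges are exactly its $c$ star edges (a $K_{1,c}$), and for each $\ell<k$ the unique color-$\ell$ edge at $u$ is the incoming outlet from copy $\ell$, which already lies in a $K_{1,c}$ inside that copy and is not enlarged; checking that no two equally colored edges at $u$ enter different blocks then shows every component is a $K_{1,c}$, so the partition is valid.

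For the harder half, let $\chi$ be an arbitrary edge partition of $H_i$ into $k$ subgraphs of component size at most $c$. Its restriction to each $H_{i-1}$-copy has component size at most $c$ (components only shrink in a subgraph), so by the packaged hypothesis copy $\ell$ has all outlets colored $\chi_\ell$, each sitting in a monochromatic $K_{1,c}$ inside copy $\ell$ with its glued endpoint a leaf. I would then argue in three steps. (i) The colors $\chi_1,\dots,\chi_{k-1}$ are distinct: equal colors $\chi_{\ell_1}=\chi_{\ell_2}$ would make the incoming outlets from copies $\ell_1$ and $\ell_2$ at any $K_{1,c}$-copy center $u$ two equally colored edges, each already an edge of a $K_{1,c}$; those two stars are edge-disjoint and meet only at $u$, so they span a connected monochromatic subgraph with $2c>c$ edges. (ii) Writing $\chi^{*}$ for the unique color not among $\chi_1,\dots,\chi_{k-1}$, every outlet of $H_i$ gets color $\chi^{*}$: a star edge at $u$ colored $\chi_\ell$ would, together with the incoming outlet of color $\chi_\ell$ at $u$ and the $K_{1,c}$ the latter spans, give a connected monochromatic subgraph of at least $c+1$ edges. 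This is the second assertion of the lemma. (iii) Every monochromatic component is a $K_{1,c}$: the only vertices shared by two blocks (the $H_{i-1}$-copies and the $K_{1,c}$-copies) are the $K_{1,c}$-copy centers $u$, and at each $u$ the $c$ star edges all carry $\chi^{*}$ while the $k-1$ incoming outlets carry $k-1$ distinct colors, all $\ne\chi^{*}$; hence for every color at most one edge at $u$ leads into each block, so no monochromatic component straddles two blocks. Thus each component lies in a single block, where it is a $K_{1,c}$ by the inductive hypothesis (inside an $H_{i-1}$-copy) or by inspection (a $K_{1,c}$-copy, which by (ii) is monochromatic and has no exit).

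I expect the real work to be the gluing bookkeeping rather than any individual estimate: one must pin down that the blocks overlap exactly at the $K_{1,c}$-copy centers, that such a center is incident to precisely its $c$ star edges plus one outlet from each of the $k-1$ copies of $H_{i-1}$, and that within $H_{i-1}$ the glued endpoint of every outlet has degree one. Once this local picture is fixed, each of (i)--(iii) and the existence check is a single application of the bound ``component size at most $c$.''
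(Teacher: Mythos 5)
Your proposal is correct and follows essentially the same route as the paper's proof: induction on $i$, with color rotation on the $k-1$ copies of $H_{i-1}$ for existence, and the observation that each outlet already sits in a saturated $K_{1,c}$ (with its glued endpoint as a leaf) forcing the $k-1$ incoming outlet colors at each $K_{1,c}$-center to be distinct and the star edges to take the one remaining color. Your write-up merely makes explicit some steps the paper leaves implicit, namely the leaf-versus-center bookkeeping and the verification in step (iii) that no monochromatic component straddles two blocks.
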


\begin{proof}
We prove the lemma by induction on $i$.
For the base case when $i = 0$,
the lemma clearly holds for $H_0 = K_{1,kc}$.
Now let $i \ge 1$ for the inductive step.

We first show that $H_i$ admits an edge partition into $k$ subgraphs
with component size at most $c$.
By the induction hypothesis,
$H_{i-1}$ admits an edge partition into $k$ subgraphs
in which each component is a star $K_{1,c}$,
and the $c^{i-1}$ outlets are in the same subgraph.
Consider the $k$-color assignment to the edges of $H_{i-1}$
corresponding to this edge partition.
Then by rotation of colors,
there is an edge partition of the union of the $k-1$ copies of $H_{i-1}$ in
$H_i$,
such that the outlets from different copies have different colors.
Thus the outlets from the $k-1$ copies of $H_{i-1}$
have $k-1$ different colors.
Assign the remaining color to all edges of the $c^{i-1}$ disjoint copies of
$K_{1,c}$ in $H_i$.
This color assignment corresponds to an edge partition of $H_i$ into $k$ subgraphs
in which each component is a star $K_{1,c}$,
and the $c^i$ outlets are all in the same subgraph.

We next show that
in any edge partition of $H_i$ into $k$ subgraphs
with component size at most $c$,
each component must be a star $K_{1,c}$,
and the $c^i$ outlets must be in the same subgraph.
Consider an arbitrary edge partition of $H_i$ into $k$ subgraphs with at
most $c$ edges in each component,
and the corresponding $k$-color assignment to the edges of $H_i$.
Then in the derived edge partition of each copy of $H_{i-1}$ in $H_i$,
each component is a star $K_{1,c}$,
and all $c^{i-1}$ outlets are in the same subgraph,
by the induction hypothesis.
Moreover,
since each component in each copy of $H_{i-1}$
already has the maximum allowable size $c$,
the outlets from different copies of $H_{i-1}$ must have different colors
to avoid forming larger components.
Then all edges of the $c^{i-1}$ disjoint copies of $K_{1,c}$ in $H_i$
must have the only remaining color.
\end{proof}

\subsection{Reduction for $k = 2$}

Our proof for the $k = 2$ case of Theorem~\ref{thm:k2c2},
Corollary~\ref{cor:arboricity},
and
Corollary~\ref{cor:stararboricity}
is based on a reduction from the NP-complete problem
\textsc{2-Colorability of 3-Uniform Hypergraphs}~\cite{Lo73}.

A \emph{$q$-uniform hypergraph} consists of a set $V$ of vertices
and a set $E$ of hyperedges,
where each hyperedge in $E$ is a subset of exactly $q$ vertices in $V$.
A hypergraph is \emph{$k$-colorable}
if its vertices can be colored with $k$ colors such that in each hyperedge,
not all $q$ vertices have the same color.

Given a $3$-uniform hypergraph $G$,
we will construct a bipartite graph $G_2$ with maximum degree $2c$ such that
$G$ is $2$-colorable
if and only if
$G_2$ admits an edge partition into $2$ subgraphs
with component size at most $c$.

Let $\Delta$ be the maximum degree of $G$, that is,
the maximum number of hyperedges in which a vertex is contained.
Let $j$ be the smallest integer such that $c^j \ge (c-1)\Delta$.

Include in $G_2$ a distinct copy of $H_j$ for each vertex in $G$,
and a distinct copy of $K_{1,c+1}$ for each hyperedge in $G$.
For each hyperedge in $G$, which consists of three vertices,
take $c+1$ distinct outlets
from the corresponding three copies of $H_j$,
with at least one outlet from each copy,
then attach the $c+1$ outlets to the $c+1$ leaves
of the copy of $K_{1,c+1}$ corresponding to the hyperedge,
with one outlet to each leaf.
This completes the construction of $G_2$.

Recall that $H_j$ has $c^j$ outlets.
Since $c^j \ge (c-1)\Delta$,
each copy of $H_j$
can contribute at least $c-1$ distinct outlets to each adjacent copy of $K_{1,c+1}$.
On the other hand,
of the three copies of $H_j$ adjacent to each copy of $K_{1,c+1}$,
each must contribute at least $1$
and hence at most $c+1 - 2 = c-1$ of the $c+1$ outlets.
Thus there are enough outlets from each copy of $H_j$
to the adjacent copies of $K_{1,c+1}$.

By our choice of $j$,
we have
$c^{j-1} < (c-1)\Delta$ and hence $c^j < c^2 \Delta$.
Recall~\eqref{eq:H} that the size of $H_j$ is at most $(c+k)^{j+2}$.
Since $k = 2$ and $c \ge 2$, we have
$$
(c+k)^{j+2} \le (c^2)^{j+2} = c^4 (c^j)^2 < c^4 (c^2 \Delta)^2 = c^8 \Delta^2.
$$
The reduction is clearly polynomial.
The following lemma completes the proof for the $k = 2$ case of
Theorem~\ref{thm:k2c2}, Corollary~\ref{cor:arboricity}
and Corollary~\ref{cor:stararboricity}:

\begin{lemma}\label{lem:k2}
$G$ is $2$-colorable
if and only if
$G_2$ admits an edge partition into two subgraphs
with component size at most $c$,
if and only if
$G_2$ admits an edge partition into two forests
with component size at most $c$,
if and only if
$G_2$ admits an edge partition into two forests of stars
with component size at most $c$.
\end{lemma}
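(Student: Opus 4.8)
The plan is to establish the chain of four equivalences by proving a cycle of implications, using Lemma~\ref{lem:H} as the crucial black box. The only non-trivial direction is to show that if $G_2$ admits an edge partition into two subgraphs with component size at most $c$, then $G$ is $2$-colorable; the reverse direction (from a $2$-coloring of $G$ to a decomposition of $G_2$) can be done constructively so that the resulting subgraphs are in fact forests of stars, which immediately yields the three easy implications down the chain (a forest-of-stars decomposition is a forest decomposition is a decomposition). So the real content is one implication forward and one implication backward, with the stars-version automatically bridging the three variants.

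For the forward direction, I would start from an arbitrary edge partition of $G_2$ into two subgraphs with component size at most $c$, with the corresponding $2$-coloring of the edges. By Lemma~\ref{lem:H}, within each copy of $H_j$ every component is a star $K_{1,c}$ and all $c^j$ outlets of that copy receive the same color; call this the \emph{color of the vertex} of $G$ corresponding to that copy. I then need to argue that this vertex coloring is a proper $2$-coloring of the hypergraph $G$, i.e.\ no hyperedge is monochromatic. Consider a hyperedge $e$ with its copy of $K_{1,c+1}$; its $c+1$ leaf edges are attached to $c+1$ outlets drawn from the three copies of $H_j$ for the vertices of $e$, with at least one outlet from each. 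If all three vertices had the same color, say color $1$, then all $c+1$ leaf edges of this $K_{1,c+1}$ would be forced: each such leaf edge shares the center vertex of $K_{1,c+1}$ with the others, and each is adjacent (at its outlet endpoint) to a color-$1$ component of size exactly $c$ in some $H_j$. Here I would spell out the counting: a leaf edge of $K_{1,c+1}$ colored $1$ extends a size-$c$ star to size $c+1$, violating the bound; but a leaf edge colored $2$ together with the other leaf edges colored $2$ forms a star at the center of $K_{1,c+1}$, and there are $c+1$ leaves, so at least two leaves must take the same color, and whichever color $1$ or $2$ is used by $\lceil (c+1)/2\rceil \ge 2$ leaves we get a contradiction — color $1$ blows up an $H_j$-star, and in the monochromatic-vertex case color $2$ would need all $c+1$ leaves (since color $1$ is entirely forbidden at these outlets), giving a $K_{1,c+1}$ component of size $c+1>c$. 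Hence no hyperedge is monochromatic and $G$ is $2$-colorable.

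For the backward direction, given a proper $2$-coloring of $G$, I color the edges of $G_2$ as follows. For each vertex of $G$ with color $a \in \{1,2\}$, use Lemma~\ref{lem:H} to $2$-color the edges of its copy of $H_j$ so that every component is a $K_{1,c}$ and all outlets get color $a$ (the lemma gives such a partition with outlets in one subgraph; rotate colors so that subgraph is subgraph $a$). Now for each hyperedge $e = \{u,v,w\}$, its $K_{1,c+1}$ has $c+1$ leaf edges attached to outlets of the copies for $u,v,w$; since $e$ is not monochromatic, the colors of $u,v,w$ are not all equal, so both colors $1$ and $2$ occur among them. I color each leaf edge of $K_{1,c+1}$ with the color \emph{opposite} to the color of the outlet it is attached to. Then no leaf edge merges with the size-$c$ star at its outlet (they get different colors), and at the center of $K_{1,c+1}$ the leaf edges split between the two colors according to how many outlets of each color feed into it; because both colors appear among $u,v,w$ and each contributes at most $c-1$ outlets, each color-class at this center has size at most $c-1 \le c$, and each such class is a star. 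Every component in the resulting partition of $G_2$ is a star of size at most $c$, so this is simultaneously a valid decomposition, a forest decomposition, and a forest-of-stars decomposition.

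The main obstacle is the careful case analysis in the forward direction showing a monochromatic hyperedge is impossible: one must correctly track, for each leaf edge of $K_{1,c+1}$, that its outlet endpoint already sits in a full size-$c$ star of a fixed color inside the corresponding $H_j$, conclude that such a leaf edge is forbidden that color, and then use the pigeonhole count on the $c+1$ leaves at the shared center to derive the contradiction. Everything else — the three easy implications and the constructive backward direction — follows routinely once the stars-preserving construction is in hand, and the degree bound $2c$ on $G_2$ (each $H_j$ edge has degree at most $\max(kc, c) = 2c$ within $H_j$ and gains at most one more incidence when its outlet is attached, while each $K_{1,c+1}$ center has degree $c+1 \le 2c$) is a direct check that I would note in passing to complete Theorem~\ref{thm:k2c2} for $k=2$.
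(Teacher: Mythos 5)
Your proposal follows essentially the same route as the paper: the same cycle of four implications, with Lemma~\ref{lem:H} forcing each component inside a copy of $H_j$ to be a full $K_{1,c}$ star with all outlets of that copy in one color class, the opposite-color rule on the leaves of each $K_{1,c+1}$ for the constructive direction, and the observation that a monochromatic hyperedge would force all $c+1$ edges of its $K_{1,c+1}$ into a single color class of size $c+1>c$. One small correction to the backward direction: a color class at the center of a $K_{1,c+1}$ can have up to $c$ edges, not $c-1$, since when two of the three vertices share a color their copies together may supply up to $c$ of the $c+1$ outlets; the bound you actually need is just that each class misses at least one of the $c+1$ leaf edges, which holds because both vertex colors occur among the three copies and each copy contributes at least one outlet.
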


\begin{proof}
It suffices to prove a cycle of four implications:
\begin{enumerate}\setlength\itemsep{0pt}
\item
if $G$ is $2$-colorable,
then $G_2$ admits an edge partition into two forests of stars
with component size at most $c$,
\item
if $G_2$ admits an edge partition into two forests of stars
with component size at most $c$,
then $G_2$ admits an edge partition into two forests
with component size at most $c$,
\item
if $G_2$ admits an edge partition into two forests
with component size at most $c$,
then $G_2$ admits an edge partition into two subgraphs
with component size at most $c$,
\item
if $G_2$ admits an edge partition into two subgraphs
with component size at most $c$,
then $G$ is $2$-colorable.
\end{enumerate}

We first prove implication 1.
Suppose that there is a $2$-coloring of the vertices of $G$ such that
in each hyperedge,
not all three vertices have the same color.
Color the edges in each copy of $H_j$ in $G_2$ with two colors
to partition them into two subgraphs as in Lemma~\ref{lem:H},
such that all $c^j$ outlets in each copy of $H_j$
have the same color as the corresponding vertex in $G$.
Next color the edges in each copy of $K_{1,c+1}$ in $G_2$ corresponding to a hyperedge in $G$,
such that each edge of $K_{1,c+1}$ has a color different
(note that there are only two colors available)
from the adjacent outlet.
Since in each hyperedge of $G$ not all three vertices have the same color,
and since the $c+1$ outlets adjacent to the $c+1$ edges of $K_{1,c+1}$ include
at least one outlet from each of the three copies of $H_j$,
it follows that not all $c+1$ edges have the same color.
Thus we obtain an edge partition of $G_2$ into two subgraphs
with component size at most $c$.
Note that the two subgraphs are indeed two forests of stars.

Implications 2 and 3 are trivial.

We next prove implication 4.
Suppose that $G_2$ admits an edge partition into two subgraphs
with component size at most $c$.
Color the edges of $G_2$ with two colors according to this partition.
Then it follows by Lemma~\ref{lem:H} that
in each copy of $H_j$ in $G_2$,
each component is a star with $c$ edges,
and all outlets have the same color.
Assign this color to the corresponding vertex in $G$.
Since each outlet of $H_j$ is already in a component of the maximum allowable size $c$,
each edge of $K_{1,c+1}$ must have a color different from its adjacent outlet.
In each copy of $K_{1,c+1}$,
not all $c+1$ edges can have the same color
under the constraint of component size at most $c$.
Correspondingly,
in each hyperedge of $G$,
not all three vertices have the same color either.
Thus $G$ is $2$-colorable.
\end{proof}

\subsection{Reduction for $k \ge 3$}

Our proof for the $k \ge 3$ case of Theorem~\ref{thm:k2c2},
Corollary~\ref{cor:arboricity},
and
Corollary~\ref{cor:stararboricity}
is based on a reduction from the NP-complete problem
\textsc{$k$-Colorability},
which asks whether a given graph $G$ admits a vertex coloring with $k$ colors.
Maffray and Preissmann~\cite{MP96} proved that for any $k \ge 3$,
\textsc{$k$-Colorability} is NP-hard even in triangle-free graphs.
Emden-Weinert, Hougardy, and Kreuter~\cite{EHK98}
proved that for any $k \ge 3$,
\textsc{$k$-Colorability} is NP-hard in graphs with maximum degree
$k + \lceil\sqrt{k}\,\rceil - 1$.
Note that \textsc{$k$-Colorability} for $k = 2$ is just bipartite testing
which is well-known to be solvable in polynomial time.
Thus we had to prove the $k = 2$ case by reduction from a different problem.

Let $G$ be an input graph for \textsc{$k$-Colorability}.
Construct $G_k$ as follows.
Let $j$ be the smallest integer such that $c^j$ is at least
the maximum degree of $G$.
For each vertex in $G$,
include in $G_k$ a distinct copy of $H_j$.
For each edge in $G$, which consists of two vertices,
take a distinct outlet from the copy of $H_j$ for each vertex,
then join the two ends of the two outlets into one vertex.

It is easy to verify that $G_k$ is bipartite, and the reduction is polynomial.
Moreover,
by a similar argument as in the proof of Lemma~\ref{lem:k2},
$G$ is $k$-colorable
if and only if
$G_k$ admits an edge partition into $k$ subgraphs
with component size at most $c$,
if and only if
$G_k$ admits an edge partition into $k$ forests
with component size at most $c$,
if and only if
$G_k$ admits an edge partition into $k$ forests of stars
with component size at most $c$.

\bigskip
This completes the proof of Theorem~\ref{thm:k2c2},
Corollary~\ref{cor:arboricity},
and Corollary~\ref{cor:stararboricity}.

\section{Fixed-parameter algorithm for bin packing}

In this section we prove Theorem~\ref{thm:fpt} by
obtaining a fixed-parameter algorithm for \bp\ with parameter $c$.

Recall Definition~\ref{def:bp} that the input to \bp\ consists of
$n$ items of integer weights $w_i$, $1 \le i \le n$,
and $k$ bins of integer capacity $c$.
The problem has a solution only if $\max\{ w_i \mid 1 \le i \le n\} \le c$ and
$\sum_{i=1}^n w_i \le k c$,
which we assume.
Let $N$ be the size of the problem instance encoded in binary.
Then
$N = \Theta(n + \sum_{i=1}^n \log w_i + \log k + \log c)$.
In particular, we have $\log kc = O(N)$.

For $1 \le w \le c$, let $a_w$ be the number of items with weight $w$.
In $N^{O(1)}$ time, we can compute $a_w$ for all $w$.
Then $\sum_{w=1}^c w\,a_w = \sum_{i=1}^n w_i \le k c$.
With the multiplicities $a_w$ for $1 \le w \le c$,
we no longer need $w_i$ for $1 \le i \le n$.
Henceforth, we work on the $c$ integers $a_w$ only.
Without loss of generality,
we increase $a_1$
until $\sum_{w=1}^c w\,a_w = k c$.
Note that each $a_w$ has magnitude at most $kc$, where $\log kc = O(N)$.
It remains to decide whether the items with multiplicities $a_w$ can be partitioned
into $k$ subsets, each of total weight exactly $c$.

Consider the two vectors
$\boldsymbol{w} = (1,2,\ldots,c)$
and
$\boldsymbol{a} = (a_1,a_2,\ldots,a_c)$
in $\ZZ_{\ge0}^c$.
Then the condition $\sum_{w=1}^c w\,a_w = kc$
becomes $\boldsymbol{w}\cdot\boldsymbol{a} = kc$.
For $b \ge 0$,
let
$$
\P_b = \{\,
\boldsymbol{x} \in \ZZ_{\ge 0}^c
\mid \boldsymbol{w}\cdot \boldsymbol{x} = b
\,\}.
$$
Then $\P_c$ is the set of vectors corresponding to
all feasible \emph{patterns} of item multiplicities for a single bin,
and we can reformulate the problem as deciding whether there exists
a multiset of vectors in $\P_c$
whose sum is $\boldsymbol{a}$.

Recall that the \emph{partition number} $p(n)$ for $n \ge 0$
is the number of multisets of positive integers summing up to $n$.
In particular, $p(0) = 1$.
Then, by definition, the size of $\P_c$ is exactly $p(c)$.
Moreover, the size of $\P_b$ is at most $p(b)$ for all $b \ge 0$.

It is known that $p(n) = 2^{O(\sqrt{n})}$~\cite[Theorem 15.7]{LW01}.
Let $d(c) \ge 0$ be the smallest integer such that
$2^d > \sum_{h=0}^d p(hc)$ for all $d > d(c)$.
Since $\sum_{h=0}^d p(hc) = 2^{O(\sqrt{d c})}$,
we have $d(c) = O(c)$.

Let $\boldsymbol{v}_i$, $1 \le i \le p(c)$, be the vectors in $\P_c$.
For any index set $I \subseteq \{ 1,\ldots, p(c) \}$,
the vector $\sum_{i\in I} \lambda_i \boldsymbol{v}_i$,
where $\lambda_i \in \ZZ_{\ge 0}$ for all $i \in I$,
is called an \emph{integer conical combination} of vectors in $\P_c$.
Consider the \emph{integer conical hull} of $\P_c$
consisting of all integer conical combinations of vectors in $\P_c$:
$$
\mathrm{int.cone}(\P_c) = \bigg\{\, \sum_{i=1}^{p(c)} \lambda_i \boldsymbol{v}_i
\;\Big\vert\;
\lambda_i \in \ZZ_{\ge 0} \textrm{ for } 1 \le i \le p(c)
\bigg\}.
$$

We prove the following lemma using a common technique for Carath\'eodory bounds;
see for example~\cite[Lemma 3]{ES06}.

\begin{lemma}\label{lem:cone}
Every vector in
$\mathrm{int.cone}(\P_c)$
can be represented as the integer conical combination of
at most $d(c)$ distinct vectors in $\P_c$.
\end{lemma}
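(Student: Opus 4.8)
The plan is to prove Lemma~\ref{lem:cone} by a standard minimal-representation argument combined with a counting bound tailored to the definition of $d(c)$. Fix any vector $\bs{u} \in \mathrm{int.cone}(\P_c)$, and among all ways of writing $\bs{u}$ as an integer conical combination of vectors in $\P_c$, choose one that uses the fewest distinct vectors; say it uses the vectors $\bs{v}_i$ for $i$ in some index set $I$, with multiplicities $\lambda_i \ge 1$. Let $d = |I|$; the goal is to show $d \le d(c)$. Toward a contradiction, suppose $d > d(c)$.

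The key observation is that since every $\bs{v}_i \in \P_c$ satisfies $\bs{w} \cdot \bs{v}_i = c$, any subset-sum $\sum_{i \in J} \bs{v}_i$ with $J \subseteq I$ satisfies $\bs{w} \cdot (\sum_{i \in J} \bs{v}_i) = |J| \cdot c$, so it lies in $\P_{|J|c}$, a set of size at most $p(|J|c)$. I would consider the $2^d$ subsets $J \subseteq I$ and the map $J \mapsto \sum_{i \in J} \bs{v}_i$. Each such sum lands in $\P_{hc}$ where $h = |J| \in \{0,1,\dots,d\}$, and the total number of possible image vectors across all subset sizes is at most $\sum_{h=0}^{d} p(hc)$. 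By the definition of $d(c)$, since $d > d(c)$ we have $2^d > \sum_{h=0}^{d} p(hc)$, so by pigeonhole two distinct subsets $J_1 \ne J_2$ of $I$ have the same subset-sum. Removing the common part $J_1 \cap J_2$ from both, I obtain two disjoint nonempty subsets $A, B \subseteq I$ with $\sum_{i \in A} \bs{v}_i = \sum_{i \in B} \bs{v}_i$ (at least one of $A,B$ is nonempty, and equality of sums forces both to be nonempty since vectors in $\P_c$ are nonzero).

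Now I would use this relation to reduce the number of distinct vectors, contradicting minimality. Pick $i_0 \in A$ realizing $\lambda_{i_0} = \min\{\lambda_i : i \in A\}$. Replace each occurrence accordingly: subtract $\lambda_{i_0}$ copies of each $\bs{v}_i$, $i \in A$, and add $\lambda_{i_0}$ copies of each $\bs{v}_i$, $i \in B$. Since $\sum_{i\in A}\bs{v}_i = \sum_{i\in B}\bs{v}_i$, this does not change the value of the combination, which is still $\bs{u}$; all multiplicities remain nonnegative integers; the multiplicity of $\bs{v}_{i_0}$ drops to $0$, so it is eliminated; and no new distinct vector is introduced since $B \subseteq I$ already. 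Hence the new combination uses strictly fewer than $d$ distinct vectors, contradicting the choice of a minimal representation. Therefore $d \le d(c)$, which is exactly the claim.

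I expect the main subtlety to be purely bookkeeping: making sure that after the subtract-add swap all coefficients stay nonnegative integers (handled by choosing $i_0$ to attain the minimum over $A$), that at least one distinct vector is genuinely removed, and that passing from $J_1, J_2$ to the disjoint pair $A, B$ correctly preserves nonemptiness of both sides. The counting step itself is immediate from the definition of $d(c)$ and the bound $|\P_b| \le p(b)$ already established in the text, and the asymptotics $d(c) = O(c)$ are likewise already in hand, so no further estimation is needed here.
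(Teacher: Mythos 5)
Your proposal is correct and follows essentially the same route as the paper: the same subset-sum pigeonhole count against $\sum_{h=0}^{d} p(hc)$ using the definition of $d(c)$, and the same swap with $\lambda = \min\{\lambda_i : i \in A\}$ to eliminate one distinct vector. The only cosmetic difference is that you phrase it as a contradiction with a minimal representation, while the paper iterates the reduction until at most $d(c)$ distinct vectors remain.
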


\begin{proof}
Let $I$ be any set of $d$ distinct indices from $\{ 1,\ldots, p(c) \}$,
where $0 \le d \le p(c)$.
Then $I$ has $2^d$ distinct subsets.
Denote by $|H|$ the size of a set $H$.
For each subset $H \subseteq I$, where $0 \le |H| \le d$,
the vector
$\boldsymbol{x} = \sum_{i\in H} \boldsymbol{v}_i$
satisfies
$$
\boldsymbol{w}\cdot\boldsymbol{x}
= \boldsymbol{w}\cdot \sum_{i\in H} \boldsymbol{v}_i
= \sum_{i\in H}
\boldsymbol{w}\cdot \boldsymbol{v}_i
= \sum_{i\in H} c
= |H|\,c,
$$
and hence
$\boldsymbol{x} \in \P_{hc}$
with $h = |H|$.
Thus $\boldsymbol{x}$ is one of the vectors in
$\cup_{h=0}^d \P_{hc}$.

Consider any vector
$\boldsymbol{u} = \sum_{i\in I} \lambda_i \boldsymbol{v}_i$,
where $\lambda_i \in \ZZ_{> 0}$ for all $i\in I$.
Suppose that $d > d(c)$.
Then by our choice of $d(c)$, 
we have
$2^d > \sum_{h=0}^d p(hc)$.
Note that the number of vectors in
$\cup_{h=0}^d \P_{hc}$
is at most $\sum_{h=0}^d p(hc)$.
Thus the number of distinct subsets of $I$
exceeds the number of vectors in $\cup_{h=0}^d \P_{hc}$.
By the pigeonhole principle,
there are two distinct subsets $A$ and $B$ of $I$
such that
$\sum_{i\in A} \boldsymbol{v}_i = \sum_{i\in B} \boldsymbol{v}_i$.
Then $A' = A\setminus B$ and $B' = B\setminus A$
are two distinct and disjoint subsets of $I$
such that
$\sum_{i\in A'} \boldsymbol{v}_i = \sum_{i\in B'} \boldsymbol{v}_i$.
Assume without loss of generality that $A' \neq \emptyset$.

Let $\lambda = \min\{\, \lambda_i \mid i \in A' \,\}$.
We can rewrite $\boldsymbol{u}$ as follows:
\begin{align*}
\sum_{i\in I} \lambda_i \boldsymbol{v}_i
&= \sum_{i\in I\setminus(A'\cup B')} \lambda_i \boldsymbol{v}_i
+ \sum_{i\in A'} \lambda_i \boldsymbol{v}_i
+ \sum_{i\in B'} \lambda_i \boldsymbol{v}_i\\
&= \sum_{i\in I\setminus(A'\cup B')} \lambda_i \boldsymbol{v}_i
+ \sum_{i\in A'} (\lambda_i - \lambda) \boldsymbol{v}_i
+ \sum_{i\in B'} (\lambda_i + \lambda) \boldsymbol{v}_i
= \sum_{i\in I} \mu_i \boldsymbol{v}_i,
\end{align*}
where
$$
\mu_i = \left\{
\begin{array}{ll}
\lambda_i
&\textrm{for } i\in I\setminus(A'\cup B')\\
\lambda_i - \lambda
&\textrm{for } i\in A'\\
\lambda_i + \lambda
&\textrm{for } i\in B'.
\end{array}
\right.
$$
Note that $\mu_i \in \ZZ_{\ge 0}$ for all $i\in I$,
and that $\mu_i = 0$ for at least one index $i \in A' \subseteq I$.
Thus $\boldsymbol{u}$ is expressed as
the integer conical combination of at most $d - 1$ distinct vectors in $\P_c$.

By repeating the above argument,
we can eventually obtain a representation of $\boldsymbol{u}$
as the integer conical combination of at most $d(c)$ distinct vectors in $\P_c$.
\end{proof}

If there is a multiset of vectors in $\P_c$ summing up to $\boldsymbol{a}$,
then
$\boldsymbol{a}$ is an integer conical combination of vectors in $\P_c$,
and hence is included in $\mathrm{int.cone}(\P_c)$,
and hence by Lemma~\ref{lem:cone}
can be represented as an integer conical combination of
$d \le d(c)$ distinct vectors in $\P_c$.
Thus to decide
whether there exists
a multiset of vectors in $\P_c$ whose sum is $\boldsymbol{a}$,
we can enumerate all subsets of $\{1,\ldots,p(c)\}$ of size $d \le d(c)$,
then for each subset $I$, check whether
$\sum_{i\in I} \lambda_i \boldsymbol{v}_i = \boldsymbol{a}$ has a solution
with $\lambda_i \in \ZZ_{\ge 0}$ for all $i\in I$.

The checking step reduces to an integer linear program,
with $O(c)$ linear constraints on at most $d(c) = O(c)$ variables $\lambda_i$,
and with integer coefficients at most $kc$, where $\log kc  = O(N)$.

We now analyze the running time of our algorithm.
Recall that converting the item weights $w_i$ to multiplicities $a_w$ takes $N^{O(1)}$ time.
Finding the $p(c)$ vectors in $\P_c$,
$\boldsymbol{v}_i$ for $1 \le i \le p(c)$,
takes $c^{O(c)}$ time.
The number of subsets of $\{1,\ldots,p(c)\}$ of size $d \le d(c)$ is at most
$(p(c))^{d(c)} = (2^{O(\sqrt{c})})^{O(c)} = 2^{O(c^{3/2})}$.

Solving an integer linear program with $n$ variables and $m$ constraints
and with maximum coefficient $\Delta$
takes $n^{O(n)} m^{O(1)} (\log\Delta)^{O(1)}$ time\footnote{See discussion at
\url{https://cstheory.stackexchange.com/questions/16530}
for more refined estimates.}~\cite{Le83,Ka87}.
In our setting, this is
$(O(c))^{O(c)} (O(c))^{O(1)} (O(N))^{O(1)} = c^{O(c)}N^{O(1)}$.

Thus the total running time is
$N^{O(1)} + c^{O(c)} + 2^{O(c^{3/2})}\cdot c^{O(c)}N^{O(1)}$,
which simplifies to
$2^{O(c^{3/2})}N^{O(1)}$,
since $c^{O(c)} = 2^{O(c\log c)}$.
This completes the proof of Theorem~\ref{thm:fpt}.

\newpage
\section{Dual approximation algorithm for bin packing}

In this section we prove Theorem~\ref{thm:dual}.

Fix any $0 < \epsilon \le 1$.
Separate the $n$ items into two groups:
\emph{small} items of weight at most $\epsilon c$,
and
\emph{large} items of weight greater than $\epsilon c$.

Let $c' = \lfloor 2\epsilon^{-2} \rfloor$.
Note that $c' \cdot \epsilon^2 c / 2 \le c$.
For each large item of weight $w_i > \epsilon c$,
let $w'_i$ be the largest integer such that
$w'_i\cdot \epsilon^2 c / 2 \le w_i$.
Then $(w'_i + 1)\cdot \epsilon^2 c / 2 > w_i$.

The algorithm works as follows.
If the total weight of the $n$ items exceeds $kc$,
then clearly the $n$ items cannot be packed into $k$ bins of capacity $c$,
so return no.
Otherwise,
construct a set of scaled items including
one item of weight $w'_i$
for each large item of weight $w_i$.
Use the fixed-parameter algorithm in Theorem~\ref{thm:fpt}
to decide
whether the set of at most $n$ scaled items $w'_i$ can be packed into $k$ scaled
bins of capacity $c'$,
then return the answer accordingly.

\begin{lemma}
If the scaled items can be packed into $k$ scaled bins
of capacity $c'$,
then the $n$ items can be packed into
$k$ enlarged bins of capacity $(1+\epsilon)c$.
\end{lemma}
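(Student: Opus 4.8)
The plan is to lift the given packing of the scaled items into $k$ bins of capacity $c'$ to a packing of all $n$ original items, by placing each large item into the bin chosen for its scaled copy and then inserting the small items greedily.

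The first step is to bound the true weight of the large items that land in a single bin. Fix such a bin, and let its large items have scaled weights $w'_{i_1},\dots,w'_{i_m}$, so that $\sum_j w'_{i_j}\le c'$. Since $w_{i_j}<(w'_{i_j}+1)\,\epsilon^2 c/2$ for each $j$, the true total is less than $(\sum_j w'_{i_j})\,\epsilon^2 c/2+m\,\epsilon^2 c/2\le c+m\,\epsilon^2 c/2$, where I used $c'\epsilon^2 c/2\le c$. It then remains to control $m$: every large item has $w_i>\epsilon c$, so $(w'_i+1)\epsilon^2 c/2>\epsilon c$ forces $w'_i>2/\epsilon-1=(2-\epsilon)/\epsilon$, hence $c'\ge\sum_j w'_{i_j}>m(2-\epsilon)/\epsilon$ and $m<c'\epsilon/(2-\epsilon)\le(2/\epsilon^2)\cdot\epsilon/(2-\epsilon)=2/(\epsilon(2-\epsilon))$; therefore $m\,\epsilon^2 c/2<\epsilon c/(2-\epsilon)\le\epsilon c$ because $\epsilon\le1$. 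So after all large items are placed, every bin holds weight strictly less than $(1+\epsilon)c$.

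The second step inserts the small items one at a time, each into some bin whose current weight is still less than $c$; since a small item weighs at most $\epsilon c$, this keeps that bin below $(1+\epsilon)c$, and a bin is left untouched once its weight reaches $c$, so no bin ever exceeds $(1+\epsilon)c$. To see this greedy never gets stuck, suppose a small item cannot be placed: then every bin has weight at least $c$, so the total weight already placed is at least $kc$; but the algorithm only reaches this point when the total weight of all $n$ items is at most $kc$, so in fact every item is already placed, contradicting the existence of an unplaced one. Hence all $n$ items fit into $k$ bins of capacity $(1+\epsilon)c$.

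I expect the only real obstacle to be the bound on $m$: the argument works precisely because a large item ($w_i>\epsilon c$) is heavy enough to have scaled weight more than $(2-\epsilon)/\epsilon$, which caps the number of large items per scaled bin below $2/(\epsilon(2-\epsilon))$ and hence the accumulated rounding slack $m\,\epsilon^2 c/2$ below $\epsilon c$. The remaining ingredients are just the design choices recorded before the lemma — $c'=\lfloor2\epsilon^{-2}\rfloor$ is exactly what makes $c'\epsilon^2 c/2\le c$, and the preliminary ``total weight at most $kc$'' test is exactly what the greedy insertion of small items needs.
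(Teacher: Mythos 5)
Your proof is correct and follows essentially the same route as the paper's: place each large item in the bin of its scaled copy, bound the per-bin rounding slack by $\epsilon c$ using a lower bound on the scaled weight of a large item (hence an upper bound on the number of large items per bin), and then insert the small items greedily via the total-weight-at-most-$kc$ averaging argument. Your bound $m < 2/(\epsilon(2-\epsilon))$ is marginally tighter than the paper's $2\epsilon^{-1}$, but both yield the same $\epsilon c$ overflow and the argument is otherwise identical.
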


\begin{proof}
For each large item of weight $w_i > \epsilon c$,
the corresponding scaled item has weight
$$
w'_i > \frac{w_i}{\epsilon^2 c / 2} - 1
> \frac{\epsilon c}{\epsilon^2 c / 2} - 1 = 2\epsilon^{-1} - 1 \ge \epsilon^{-1}.
$$
Thus each scaled bin of capacity $c' = \lfloor 2\epsilon^{-2} \rfloor$
can hold at most
$\lfloor 2\epsilon^{-2} \rfloor / \epsilon^{-1} \le 2\epsilon^{-1}$
scaled items.

For the scaled items in each scaled bin,
the sum of their weights $w'_i$ is at most $c'$,
and hence the sum of $w'_i\cdot \epsilon^2 c / 2$ is at most
$c'\cdot \epsilon^2 c / 2 \le c$.
Recall that $w'_i\cdot \epsilon^2 c / 2 \le w_i < (w'_i + 1)\cdot \epsilon^2 c / 2$,
and hence $w_i - w'_i\cdot \epsilon^2 c / 2 < \epsilon^2 c / 2$.
Corresponding to the at most $2\epsilon^{-1}$ scaled items in each scaled bin,
the total weight of the large items can exceed $c$
by at most $2\epsilon^{-1} \cdot \epsilon^2 c / 2= \epsilon c$,
and hence is at most $(1 + \epsilon)c$.
Thus we can pack all large items
into $k$ enlarged bins of capacity $(1 + \epsilon)c$.

Since the total weight of both large and small items is at most $kc$,
at least one of the $k$ enlarged bins is filled to at most $c$ of its capacity
$(1+\epsilon)c$,
and hence can always accommodate one more small item of weight at most
$\epsilon c$.
Thus we can pack all small items into the $k$ enlarged bins
after the large items.
\end{proof}

\begin{lemma}
If the scaled items cannot be packed into $k$ scaled bins
of capacity $c'$,
then the $n$ items cannot be packed into
$k$ bins of capacity $c$.
\end{lemma}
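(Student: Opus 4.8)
The plan is to prove the contrapositive of the lemma: assuming the $n$ original items can be packed into $k$ bins of capacity $c$, I will exhibit a packing of the scaled items into $k$ scaled bins of capacity $c'$, which contradicts the hypothesis and hence proves the lemma. So fix a packing of the original items into $k$ bins of capacity $c$. For each bin, discard the small items and keep only the large items assigned to it; their total original weight is at most $c$. Keep this same assignment of large items to bins, but replace each large item of weight $w_i$ by its scaled counterpart of weight $w'_i$, and claim the result is a valid packing into $k$ bins of capacity $c'$.

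The only thing to verify is that rounding down never pushes the scaled load of a bin above $c'$. By the definition of $w'_i$ we have $w'_i \cdot \epsilon^2 c / 2 \le w_i$, so $w'_i \le 2 w_i / (\epsilon^2 c)$. Summing over the large items in one fixed bin and using that their total original weight is at most $c$, the total scaled weight in that bin is at most $2 c / (\epsilon^2 c) = 2\epsilon^{-2}$. Since each $w'_i$ is an integer, this sum is an integer, hence at most $\lfloor 2\epsilon^{-2}\rfloor = c'$. (The same computation applied to a single item shows $w'_i \le c'$, and summing over all bins gives total scaled weight at most $kc'$, so the scaled instance meets the preconditions needed to invoke Theorem~\ref{thm:fpt}.) Because the set of scaled items constructed by the algorithm consists precisely of the items $w'_i$ derived from large items, this is exactly a packing of all scaled items into $k$ bins of capacity $c'$.

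I expect no genuine obstacle here: the content of the lemma is essentially immediate once the contrapositive is taken, and the two points to keep straight are only bookkeeping — the integrality step, which is what licenses passing from the real bound $2\epsilon^{-2}$ to the integer capacity $c' = \lfloor 2\epsilon^{-2}\rfloor$, and the observation that small items are irrelevant in this direction since they are not part of the scaled instance. Finally, combining this lemma with the preceding one, and noting that the algorithm runs the procedure of Theorem~\ref{thm:fpt} on an instance with capacity $c' = \lfloor 2\epsilon^{-2}\rfloor = O(\epsilon^{-2})$, hence in $2^{O((c')^{3/2})} N^{O(1)} = 2^{O(\epsilon^{-3})} N^{O(1)}$ time, completes the proof of Theorem~\ref{thm:dual}.
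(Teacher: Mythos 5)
Your proof is correct and follows essentially the same route as the paper: take the contrapositive, discard the small items, round the large items down to multiples of $\epsilon^2 c/2$, and observe that each bin's scaled load is at most $2\epsilon^{-2}$ and hence, by integrality, at most $c' = \lfloor 2\epsilon^{-2}\rfloor$. Your version merely makes explicit the integrality step that the paper leaves implicit.
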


\begin{proof}
We prove the contrapositive.
Suppose that the $n$ items of total weight at most $kc$ can be packed into $k$ bins
of capacity $c$.
Then by discarding the small items and rounding down the weights $w_i$
of the large items to integer multiples of $\epsilon^2 c / 2$,
the rounded items of weights $w'_i \cdot \epsilon^2 c / 2$
can be packed into $k$ bins of capacity
$c = 2\epsilon^{-2} \cdot \epsilon^2 c / 2$,
and hence the scaled items of integer weights $w'_i$
can be packed into $k$ scaled bins of capacity
$c' = \lfloor 2\epsilon^{-2} \rfloor$.
\end{proof}

With at most $n$ scaled items
and capacity $c' = O(\epsilon^{-2})$ for the $k$ scaled bins,
it follows by Theorem~\ref{thm:fpt} that the running time
of our dual approximation algorithm is
$2^{O(\epsilon^{-3})}N^{O(1)}$,
where $N$ is the size of the problem instance encoded in binary.
This completes the proof of Theorem~\ref{thm:dual}.

\section{Parameterized complexity of decomposing trees}

In this section we prove Theorem~\ref{thm:tree}.
The problem \kcd\ is clearly in NP\@.
In the following, we prove that when both $k$ and $c$ are part of the input,
\kcd\ in trees is not only NP-hard but also W[1]-hard with parameter $k$,
and is FPT with parameter $c$.

\subsection{NP-hardness and W[1]-hardness with parameter $k$}

We give a polynomial reduction from
\ubp\ with $k$ bins to \kcd\ in trees.
Recall that \ubp\ is W[1]-hard with parameter $k$~\cite{JKMS13}.

Given $n$ items of weights $w_i$, $1 \le i \le n$,
and $k$ bins of capacity $c$,
we construct a tree $T$ with $n(k-1)c + \sum_{i=1}^n w_i$ edges as follows.
For each item of weight $w_i$, make a star with $(k-1)c + w_i$ edges.
Then select an arbitrary leaf from each star,
and merge the $n$ leaves selected from the $n$ stars into one center vertex,
denoted by $v$.

We claim that
the $n$ items can be packed into $k$ bins of capacity $c$
if and only if
the tree $T$ can be decomposed into $k$ subgraphs
with component size at most $c$.

We first prove the direct implication of the claim.
Suppose that the $n$ items can be packed into $k$ bins of capacity $c$.
We color the edges of the tree $T$ with $k$ colors as follows.
First color the $n$ edges incident to the center vertex $v$ with $k$ colors
corresponding to the $n$ items in the $k$ bins.
Then for each star with $(k-1)c + w_i$ edges,
color $w_i - 1$ additional edges with the same color
as the edge incident to the center vertex $v$,
and color the remaining $(k-1)c$ edges with the other $k-1$ colors,
$c$ edges of each color.
Then the edges of each color induce a forest with component size at most $c$.

We next prove the reverse implication of the claim.
Suppose that the tree $T$ can be decomposed into $k$ subgraphs with
component size at most $c$.
Color the edges of $T$ with $k$ colors corresponding to the $k$ subgraphs.
Then in the star with $(k-1)c + w_i$ edges
corresponding to the item of weight $w_i$,
the edge incident to the center vertex $v$
must have the same color as at least $w_i - 1$ other edges,
because the other $k-1$ colors can accommodate at most $(k-1)c$ edges.
Note that for each of the $k$ colors,
the edges of this color that are incident to $v$,
as well as the other edges of the same color in their respective stars,
are all connected,
and form one component of size at most $c$.
Put the $n$ items into $k$ bins corresponding to the colors of the $n$ edges
incident to $v$.
Then the items in each bin must have total weight at most $c$ too.

\subsection{FPT with parameter $c$}

We present a fixed-parameter algorithm for \kcd\ in trees with parameter $c$.
Let $T$ be a tree with $n$ vertices and $n - 1$ edges,
rooted at an arbitrary vertex of degree $1$.

For each edge $e$ between a vertex $v$ and its parent in $T$,
denote by $T_e$ the subtree of $T$ rooted at $v$ plus the edge $e$,
and denote by $s(e)$ the minimum $s \ge 1$ such that
$T_e$ admits an edge partition into $k$ subgraphs
with component size at most $c$,
under the constraint that $e$ is in a component of size at most $s$.
If no such $s$ exists, let $s(e) = c + 1$.
Then $T$ admits an edge partition into $k$ subgraphs
with component size at most $c$ if and only if
$s(e) \in [1,c]$ for all edges $e$.

We can compute $s(e)$ by dynamic programming,
following a post-order traversal of the rooted tree $T$.
For each edge $e$ incident to a leaf,
we clearly have $s(e) = 1$.
For each edge $e$ between an internal node $v$ and its parent,
and for a candidate size $s \ge 1$,
we can check whether $s(e) \le s$ as follows.

Let $m_v$ be the number of edges incident to $v$.
Construct $m_v$ items corresponding to the $m_v$ edges.
Each edge $f$ between $v$ and a child corresponds to an item of weight $s(f)$,
which has been computed following the post-order traversal.
The edge $e$ between $v$ and its parent corresponds to an item of weight $c-s + 1$,
where the surplus of $c-s$ in addition to $1$ for the edge $e$
accounts for the difference between the desired component size $s$ for $e$
and the upper bound $c$.
Then $T_e$ admits an edge partition into $k$ subgraphs
with component size at most $c$,
where $e$ is in a component of size at most $s$,
if and only if
the $m_v$ items can be packed into $k$ bins of capacity $c$.

By Theorem~\ref{thm:fpt},
there is a fixed-parameter algorithm for \bp\ that decides
whether the $m_v$ items thus constructed
can be packed into $k$ bins of capacity $c$
in $2^{O(c^{3/2})}N_v^{O(1)}$ time,
where $N_v = O(m_v(1 + \log kc))$ is the encoding size of the \bp\ instance with $m_v$ items.
By one invocation of this fixed-parameter algorithm with $s = c$,
we either find that $s(e) > c$ and abort the traversal of $T$,
or find that $s(e) \le c$.
In the latter case,
we can then determine the exact value of $s(e)$ in $[1,c]$
by a binary search,
using the fixed-parameter algorithm as a decision procedure.

The total running time of the algorithm is
$$
n^{O(1)} + \sum_v \big( O(1 + \log c) \cdot 2^{O(c^{3/2})}N_v^{O(1)} \big)
= 2^{O(c^{3/2})}n^{O(1)}.
$$
Thus \kcd\ in trees is FPT with parameter $c$.
This completes the proof of Theorem~\ref{thm:tree}.

\section{Approximation algorithms for decomposing trees}

In this section we prove Theorem~\ref{thm:mcd} and Theorem~\ref{thm:msd}.
Let $T$ be a tree with $n$ vertices, $n-1$ edges, and maximum degree $\Delta < n$,
rooted at an arbitrary vertex of degree $1$.

\subsection{Approximating \mcd}

We first present a linear-time approximation algorithm for \mcd.

Perform a pre-order traversal of the rooted tree $T$.
The root is incident to only one edge, which is colored arbitrarily.
At each vertex other than the root,
color the at most $\Delta-1$ edges to its children with
at most $\lceil \frac{\Delta - 1}c \rceil$ colors
(at most $c$ edges of each color),
all different from the color of the edge from its parent.

It is straightforward to verify that
the edges of each color induce a forest with component size at most $c$,
and moreover each component is a star with at most $c$ edges.

Let $k' = \lceil \frac{\Delta}c \rceil$
and $k'' = \lceil \frac{\Delta - 1}c \rceil + 1$.
The minimum number of subgraphs with component size at most $c$
into which $T$ can be decomposed is at least $k'$.
On the other hand, our algorithm colors the edges of $T$ with $k''$ colors,
where $k'' \le k' + 1$.
This completes the proof of Theorem~\ref{thm:mcd}.

\subsection{Approximating \msd}

We next present a polynomial-time approximation scheme for \msd.

We first note that
our approximation algorithm for \mcd\ can be slightly modified to yield
a $2$-approximation for \msd\ as follows.
During the pre-order traversal of the rooted tree $T$, at each vertex other than the root,
color the at most $\Delta-1$ edges to its children with at most $k-1$ colors
(at most $\lceil \frac{\Delta - 1}{k-1} \rceil$ edges of each color),
all different from the color of the edge from its parent.
Let $c' = \lceil \frac{\Delta}k \rceil$
and $c'' = \lceil \frac{\Delta - 1}{k-1} \rceil$.
In any edge partition of $T$ into $k$ subgraphs,
there is a subgraph with component size at least $c'$.
On the other hand, our algorithm colors the edges of $T$ with $k$ colors
such that the edges of each color induce a forest with component size at most
$c''$, where
$$
c'' = \left\lceil \frac{\Delta - 1}{k-1} \right\rceil
= \left\lceil \frac{k}{k - 1} \cdot \frac{\Delta - 1}k \right\rceil
\le \left\lceil \frac{k}{k - 1} \right\rceil \cdot \left\lceil \frac{\Delta - 1}k \right\rceil
\le 2\left\lceil \frac{\Delta}k \right\rceil
= 2c'.
$$

Now fix any $0 < \epsilon \le 1$.
To obtain a $(1+\epsilon)$-approximation of the optimal component size $c^*$,
it suffices to perform a binary search in the integer range $(c' - 1, c'']$,
using a subroutine that decides
either $c^* > c$ or $c^* \le (1+\epsilon)c$ for any candidate component size $c$.
When the search range is reduced to $(l, h]$ with $h - l = 1$,
we must have $c^* \ge h$,
and the subroutine with $c = h$ decomposes the tree into $k$ subgraphs with component size
at most $(1+\epsilon)\,h$.

\bigskip
The subroutine on a candidate component size $c$
is similar to our fixed-parameter algorithm with parameter $c$
in Theorem~\ref{thm:tree}.
As before, for each edge $e$ between a vertex $v$ and its parent in the rooted tree $T$,
denote by $T_e$ the subtree of $T$ rooted at $v$ plus the edge $e$,
and denote by $s(e)$ the minimum $s \ge 1$ such that
$T_e$ admits an edge partition into $k$ subgraphs
with component size at most $c$,
under the constraint that $e$ is in a component of size at most $s$.
If no such $s$ exists, let $s(e) = c + 1$.

By definition, if $s(e) \le c$,
then $s(e)$ is the unique integer $s \in [1,c]$
that satisfies the following two conditions:
\begin{enumerate}\setlength\itemsep{0pt}

\item[A.]
$T_e$ admits an edge partition into $k$ subgraphs
with component size at most $c$,
where $e$ is in a component of size at most $s$.

\item[B.]
$T_e$ does not admit any edge partition into $k$ subgraphs
with component size at most $c$,
where $e$ is in a component of size at most $s - 1$.

\end{enumerate}

Instead of computing $s(e)$ directly,
the subroutine tries to find a \emph{feasible} value $s\in[1,c]$
that satisfies the following two conditions,
and records it in $\tilde s(e)$:
\begin{enumerate}\setlength\itemsep{0pt}

\item[C.]
$T_e$ admits an edge partition into $k$ subgraphs
with component size at most $(1+\epsilon)c$,
where $e$ is in a component of size at most $s$.

\item[B.]
$T_e$ does not admit any edge partition into $k$ subgraphs
with component size at most $c$,
where $e$ is in a component of size at most $s - 1$.

\end{enumerate}

The subroutine computes $\tilde s(e)$ by dynamic programming,
following a post-order traversal of the rooted tree $T$.
For each edge $e$ incident to a leaf,
set $\tilde s(e) \gets 1$.
For each edge $e$ between an internal node $v$ and its parent,
check whether there exists a feasible value $s \in [1,c]$
for $\tilde s(e)$
that satisfies conditions C and B, as follows.

Let $m_v$ be the number of edges incident to $v$.
Construct $m_v$ items corresponding to the $m_v$ edges.
Each edge $f$ between $v$ and a child corresponds to an item of weight $\tilde s(f)$.
The edge $e$ between $v$ and its parent corresponds to an item of weight $c-s + 1$.

By Theorem~\ref{thm:dual},
there is a dual approximation algorithm for \bp\ that,
given the $m_v$ items and $k$ bins of capacity $c$,
either
decides that the $m_v$ items can be packed into $k$ enlarged bins
of capacity $(1+\epsilon)c$ and returns yes,
or
decides that the $m_v$ items cannot be packed into $k$ bins of capacity $c$
and returns no.

First call this dual approximation algorithm with $s = c$.
If its answer is no,
then abort the traversal of $T$, and declare that $\hl{c^* > c}$.
Otherwise,
find a feasible value $s \in [1,c]$ for $\tilde s(e)$
that satisfies both conditions C and B by a binary search,
using the dual approximation algorithm as a decision procedure.

If the traversal of $T$ finishes successfully,
declare that $\hl{c^* \le (1+\epsilon)c}$.

\bigskip
The correctness of the subroutine can be verified by induction
following the post-order traversal.
In particular,
for all edges $e$ such that a feasible value $s \in [1,c]$ is found for $\tilde s(e)$,
we have $\tilde s(e) \le s(e)$ because
they satisfy the same condition B,
while condition C is a relaxation of condition A.
Also observe that the answers of the dual approximation algorithm
have the following two implications:
\begin{itemize}\setlength\itemsep{0pt}

\item
If the $m_v$ items can be packed into $k$ enlarged bins of capacity
$(1+\epsilon)c$, then
$T_e$ admits an edge partition into $k$ subgraphs
with component size at most $(1+\epsilon)c$,
where $e$ is in a component of size at most $s$.

\item
If the $m_v$ items cannot be packed into $k$ bins of capacity $c$, then
$T_e$ does not admit any edge partition into $k$ subgraphs
with component size at most $c$,
where $e$ is in a component of size at most $s$.

\end{itemize}

By Theorem~\ref{thm:dual},
the running time of the dual approximation algorithm
on the $m_v$ items is
$2^{O(\epsilon^{-3})}N_v^{O(1)}$,
where $N_v = O(m_v(1 + \log kc))$ is the encoding size of the problem instance.
The total running time of the algorithm,
including the nested binary searches on $c^* \in (c'-1,c'']$ and on $s \in [1,c]$,
is
$$
n^{O(1)} +
O(\log(\Delta/k))\Big( n^{O(1)} + \sum_v \big( O(1 + \log c) \cdot 2^{O(\epsilon^{-3})}N_v^{O(1)} \big) \Big)
= 2^{O(\epsilon^{-3})}n^{O(1)}.
$$
Thus we have a polynomial-time approximation scheme for \msd.
This completes the proof of Theorem~\ref{thm:msd}.

\section{Concluding remark}

In our study of graph decomposition problems in this paper,
the focus is on bipartite graphs and trees.
It may be interesting to investigate the complexities of these problems
on other important graph classes too.


\begin{thebibliography}{99}

\bibitem{BFHP84}
J. C. Bermond, J. L. Fouquet, M. Habib, and B. Peroche.
On linear $k$-arboricity.
\emph{Discrete Mathematics},
52:123--132,
1984.

\bibitem{BL09}
K. Bry\'s and Z. Lonc.
Polynomial cases of graph decomposition: a complete solution of Holyer's problem.
\emph{Discrete Mathematics},
309:1294--1326,
2009.

\bibitem{CE91}
L. Cai and J. A. Ellis.
NP-completeness of edge-colouring some restricted graphs.
\emph{Discrete Applied Mathematics},
30:15--27,
1991.

\bibitem{Ca94}
P. J. Cameron.
\emph{Combinatorics: Topics, Techniques, Algorithms.}
Cambridge University Press, 1994.

\bibitem{CCFH00}
G. J. Chang, B.-L. Chen, H.-L. Fu, and K.-C. Huang.
Linear $k$-arboricities on trees.
\emph{Discrete Applied Mathematics},
103:281--287, 2000.

\bibitem{COS01}
R. Cole, K. Ost, and S. Schirra.
Edge-coloring bipartite multigraphs in $O(E\log D)$ time.
\emph{Combinatorica},
21:5--12,
2001.

\bibitem{DT97}
D. Dor and M. Tarsi.
Graph decomposition is NP-complete: a complete proof of Holyer's conjecture.
\emph{SIAM Journal on Computing},
26:1166--1187,
1997.

\bibitem{ES06}
F. Eisenbrand and G. Shmonin.
Carath\'eodory bounds for integer cones.
\emph{Operations Research Letters},
34:564--568,
2006.

\bibitem{EHK98}
T. Emden-Weinert, S. Hougardy, and B. Kreuter.
Uniquely colourable graphs and the hardness of colouring graphs of large girth.
\emph{Combinatorics, Probability and Computing},
7:375--386,
1998.

\bibitem{Fa04}
A. Farrugia.
Vertex-partitioning into fixed additive induced-hereditary properties is
NP-hard.
\emph{The Electronic Journal of Combinatorics},
11:\#R46,
2004.

\bibitem{GW92}
H. N. Gabow and H. H. Westermann.
Forests, frames, and games: algorithms for matroid sums and applications.
\emph{Algorithmica},
7:465--497,
1992.

\bibitem{GJ79}
M. R. Garey and D. S. Johnson.
\emph{Computers and Intractability: A Guide to the Theory of NP-completeness}.
W. H. Freeman and Company, 1979.

\bibitem{HS87}
D. S. Hochbaum and D. B. Shmoys.
Using dual approximation algorithms for scheduling problems:
theoretical and practical results.
\emph{Journal of the ACM},
34:144--162,
1987.

\bibitem{Ho97}
D. S. Hochbaum (editor).
\emph{Approximation Algorithms for NP-hard Problems}.
PWS Publishing Company, 1997.

\bibitem{Ho81}
I. Holyer.
The NP-completeness of edge-coloring.
\emph{SIAM Journal on Computing},
10:718-720,
1981.

\bibitem{JKMS13}
K. Jansen, S. Kratsch, D. Marx, and I. Schlotter.
Bin packing with fixed number of bins revisited.
\emph{Journal of Computer and System Sciences},
79:39--49,
2013.

\bibitem{Ji18}
M. Jiang.
Trees, paths, stars, caterpillars and spiders.
\emph{Algorithmica},
80:1964-1982,
2018. 

\bibitem{Ka87}
R. Kannan.
Minkowski's convex body theorem and integer programming.
\emph{Mathematics of Operations Research},
12:415--440,
1987.

\bibitem{KMRV97}
J. Kleinberg, R. Motwani, P. Raghavan, and S. Venkatasubramanian.
Storage management for evolving databases.
In
\emph{Proceedings of the 38th Annual Symposium on Foundations of Computer Science (FOCS'97)},
pages 353--362,
1997.

\bibitem{Le83}
H. W. Lenstra Jr.
Integer programming with a fixed number of variables.
\emph{Mathematics of Operations Research},
8:538--548,
1983.

\bibitem{LG83}
D. Leven and Z. Galil.
NP completeness of finding the chromatic index of regular graphs.
\emph{Journal of Algorithms},
4:35--44,
1983.

\bibitem{LW01}
J. H. van Lint and R. M. Wilson.
\emph{A Course in Combinatorics.}
Second Edition.
Cambridge University Press, 2001.

\bibitem{Lo73}
L. Lov\'asz.
Coverings and colorings of hypergraphs.
In
\emph{Proceedings of the 4th Southeastern Conference on Combinatorics, Graph Theory and Computing},
pages 3--12,
Utilitas Mathematica Publishing, Winnipeg,
1973.

\bibitem{MP96}
F. Maffray and M. Preissmann.
On the NP-completeness of the $k$-colorability problem for triangle-free graphs.
\emph{Discrete Mathematics},
162:313--317,
1996.

\bibitem{MG92}
J. Misra and D. Gries.
A constructive proof of Vizing's theorem.
\emph{Information Processing Letters},
41:131--133,
1992.

\bibitem{Na64}
C. St.J. A. Nash-Williams.
Decomposition of finite graphs into forests.
\emph{Journal of the London Mathematical Society},
39:12,
1964.

\end{thebibliography}
\end{document}